\title{On the Formal Metatheory of the Pure Type Systems using One-sorted Variable Names and Multiple Substitutions\thanks{This work is partly supported by Agencia Nacional de Investigaci\'on e Innovaci\'on (ANII), Uruguay}}
\author{Sebasti\'an Urciuoli
\institute{Universidad ORT Uruguay, Uruguay}
\email{urciuoli@ort.edu.uy}

}
\newtheorem{theorem}{Theorem}[section]
\newtheorem{corollary}{Corollary}[theorem]
\newtheorem{lemma}[theorem]{Lemma}
\newlist{lemenum}{enumerate}{1} 
\setlist[lemenum]{label=(\roman*),ref=\thelemma.(\roman*),noitemsep}
\newlist{lemenum*}{enumerate*}{1} 
\setlist[lemenum*]{label=(\roman*),ref=\thelemma.(\roman*),noitemsep,itemjoin={{; }}}
\newcommand{\betaconv}{\ensuremath{\ {\simeq}\beta}\ }
\newcommand{\ok}[1]{\ensuremath{#1\,\mathsf{ok}}}
\newcommand{\fv}[1]{\ensuremath{\mathsf{fv}\,#1}}
\newcommand{\dom}[1]{\ensuremath{\mathsf{dom}\,#1}}
\newcommand{\free}[2]{\ensuremath{#1 * #2}}
\newcommand{\fresh}[2]{\ensuremath{#1\,{\#}\,#2}}
\newcommand{\freshrg}{\ensuremath{\,{\#}{\downharpoonright}\,}}
\newcommand{\Var}{\ensuremath{\mathcal{V}}}
\newcommand{\Const}{\ensuremath{\mathcal{C}}}
\newcommand{\alphaconv}{\ensuremath{\ {\sim}{\alpha}\ }}
\newcommand{\alphaconvg}{\ensuremath{{\sim}{\alpha}}}
\newcommand{\typerule}[1]{{$\vdash$#1}}
\newcommand{\appendg}{\ensuremath{\,{+}\!{+}\,}}
\newcommand{\stdpts}{\ensuremath{\vdash_\mathsf{s}}}
\newcommand{\stdok}[1]{\ensuremath{#1\,\mathsf{ok}_\mathsf{s}}}
\begin{document}

\maketitle            

\begin{abstract}
We develop formal theories of conversion for Church-style lambda-terms with Pi-types in first-order syntax using one-sorted variables names and Stoughton's multiple substitutions.
We then formalize the Pure Type Systems along some fundamental metatheoretic properties: weakening, syntactic validity, closure under alpha-conversion and substitution.
Finally, we compare our formalization with others related.
The whole development has been machine-checked using the Agda system. 
Our work demonstrates that the mechanization of dependent type theory by using conventional syntax and without identifying alpha-convertible lambda-terms is feasible. 
\end{abstract}

\section{Introduction}

In \cite{tasistro2015}, Tasistro et al. developed a framework in Agda \cite{norell2007} 
containing Stoughton's theories of substitution and $\alpha$-conversion \cite{stoughton88} for the pure 
$\lambda$-calculus in its conventional syntax, i.e., first-order with only one sort of names to serve for both free and bound variables.
Their prime motivation was to test whether such concrete approach was in any way amenable to full formalization.

The use of Stoughton's multiple substitutions brings about the possibility to define a capture-avoiding and structurally recursive substitution operation by renaming the bound variables at the same time the operation is taking place. 
In contrast, the classical formal definition for the unary substitution, 
e.g., by Curry-Feys \cite{curry1958} and Hindley-Seldin \cite{hindley1986}, 
is non-structural because, in the case of $\lambda$-abstractions, 
it invokes itself twice, once to rename the bound name, and again to perform the actual substitution; the latter is on a $\lambda$-term that it is not a proper component of the input.
Non-structural recursion is not ideal in mechanizations because, in general, they require one to conduct many proofs by complete induction on the length of the syntax. 
In addition to being structurally recursive, Stoughton's substitution operation actually renames all bound variables without exempting those who may actually do not cause trouble. Perhaps counterintuitively, this turns out to be very welcome because many proofs that follow the structure of the syntax can be carried out smoothly without having to scrutinize the name of the bound variables in the case of $\lambda$-abstractions, leading to a reduction in the number of cases considered. 

The framework has been put to the test since then to verify many results about  the pure $\lambda$-calculus and the simply-typed $\lambda$-calculus (STLC). In \cite{copello2016}, the authors formalized the Church-Rosser (CR) theorem for the pure $\lambda$-calculus with $\beta$-conversion following the method by Martin-Löf and Tait and subject reduction for the STLC. In \cite{copes2018}, the authors mechanized a proof of the standardization theorem by Kashima \cite{kashima}. In \cite{urciuoli2020}, a formal proof of the strong normalization theorem for STLC by Joachimski and Matthes \cite{joachimski} was presented. Finally, in \cite{urciuoli2023} a proof of strong normalization for System T by Girard \cite{girard} was mechanically verified. 
In spite of having to consider $\alpha$-conversion explicitly, the reports revealed that the workload in terms of labour was still manageable.

Now in this work we wish to test whether our approach scales well to more complex languages and so we formalize a $\lambda$-calculus with dependent types: the Pure Type Systems (PTS) \cite{barendregt92}.
The PTS is a generalization of the $\lambda$-cube that represent many systems  under the same formalism and so it allows to study many properties for all of them. 
 
In order to mechanize the methatheory of the PTS, first we have to extend the syntax of the framework with type-annotated or Church-style $\lambda$-abstractions and $\Pi$-types. As a consequence, we have to revisit an important aspect of the theory of substitutions, namely that of restrictions, i.e., the confinement of their domains, due to some technical reason that we will address in due course. Also, we generalize the type of variables so it can be any in one-to-one correspondence with the natural numbers; in previous works, names were identified with numbers, which takes away a bit of the fun of using names. 
Besides, with this generalization we allow to use strings for the variables thus narrowing the gap between a potential practical implementation of a type-checker (using names) and its certification. 
Finally, we give a more accurate definition of $\alpha$-conversion which will enable us to prove a key result in the theory of $\alpha$-conversion using simpler methods than in previous work. 
All four previous features compelled us to virtually rewrite the framework entirely. 
The result is an Agda library with theories of substitution and $\alpha$-conversion for the underlying language of the PTS: a Church-style $\lambda$-calculus with $\Pi$-types. 

Once we have the new framework we formalize 
some fundamental metatheoric properties of the PTS, including: thinning (weakening), syntactic validity, closure under $\alpha$-conversion and substitution.
Following \cite{mckinna99}, we define the type system by using generalized induction, a technique which will enable us to prove thinning by only means of structural induction. Later we will show that such presentation is extensionally equivalent to a more standard one that does not feature such rule schemes.
Besides, in the course of the development of the metatheory we discuss a problem we had to face arising from the use of renaming substitutions that it seems it only appears in the context of dependent types and which has not been commented anywhere yet to the best of our knowledge. 

The structure of this work is as follows. In the next section we introduce the new framework. Some results are adapted from previous work and some others are new. From now on, all results constitute new development. In \cref{sec:calculus} we present the PTS along with some basic properties. In \cref{sec:metatheory} we formalize thinning, syntactic validity, closure under $\alpha$-conversion and substitution. In \cref{sec:related} we comment on related work. Finally, in \cref{sec:conclusions} we give some code metrics and conclude. 

Throughout this paper we will use Agda notation for definitions and lemmas, and a mix with English for the proofs with the hope of making reading more enjoyable. Some background on any dependently-typed programming is preferable, though it might be enough for some readers to have an understanding in functional programming, e.g., Haskell. The complete sources can be found at the following link: \url{https://github.com/surciuoli/pts-metatheory}.

\section{The Framework}

In this section we define the syntax of the $\lambda$-terms, substitution and $\alpha$-conversion based on the work by Stoughton \cite{stoughton88} (extended to the language of our interest). Also, we prove some important results that we will need in the following sections.

\subsection{Syntax}

Let \Var, the \textit{variables}, be any type such that, first, propositional equality is decidable, 
and second, there are two functions \Verb|encode| and \Verb|decode| such that \Verb|encode| maps every variable to a number and \Verb|decode| is its right inverse. 
We shall use meta-variables $x$, $y$, etc. to range them. Let \Const, the \textit{constants}, be any type and whose elements are ranged by $k$ and $s$. The abstract syntax of the \textit{$\lambda$-terms} ($\Lambda$) is defined inductively in \cref{fig:syntax}.

\begin{figure}
\centering
\begin{subfigure}[b]{0.4\textwidth}
\begin{Verbatim}
data Λ : Set where
  c : 𝒞 → Λ    
  v : 𝒱 → Λ
  λ[_∶_]_ : 𝒱 → Λ → Λ → Λ
  Π[_∶_]_ : 𝒱 → Λ → Λ → Λ    
  _·_ : Λ → Λ → Λ
\end{Verbatim}
\caption{Syntax}
\label{fig:syntax}
\end{subfigure}
\begin{subfigure}[b]{0.4\textwidth}
\begin{Verbatim}
fv : Λ → List 𝒱
fv (c _) = []
fv (v x) = [ x ]
fv (λ[ x ∶ A ] M) = fv A ++ (fv M - x)
fv (Π[ x ∶ A ] B) = fv A ++ (fv B - x)
fv (M · N) = fv M ++ fv N
\end{Verbatim}
\caption{Free Variables}
\label{fig:fv}
\end{subfigure}
\caption{$\lambda$-terms}
\end{figure}

In \cref{fig:fv} we define the function that returns the free names in a given $\lambda$-term, where \Verb|_++_| is list concatenation and \Verb|_-_ : List 𝒱 → 𝒱 → List 𝒱| the function that deletes every occurrence of some name in a given list.
We can then base ourselves upon \Verb|fv| to define whether a name occurs free in some $\lambda$-term or not by: 
\Verb|x $*$ M = x ∈ fv M| and read $x$ is free in $M$ and oppositely, \Verb|x # M = x ∉ fv M| and read $x$ is fresh for $M$. 

The following results about lists will enable us to construct and destruct derivations of \free{x}{M} and \fresh{x}{M} as if they were inductively defined:
\begin{lemma}
\label{lemma:genList}
\begin{enumerate}[noitemsep]
\item \Verb|appIn : ∀ {x} {xs ys : List 𝒱} → x ∈ xs ++ ys ↔ x ∈ xs ⊎ x ∈ ys|
\item \Verb|delIn : ∀ {x y xs} → x ∈ xs - y ↔ x ≢ y × x ∈ xs|
\item \Verb|appNotIn : ∀ {x} {xs ys : List 𝒱} → x ∉ xs ++ ys ↔ x ∉ xs × x ∉ ys|
\item \Verb|delNotIn : ∀ {x y xs} → x ∉ xs - y ↔ (x ≡ y ⊎ x ∉ xs)|
\end{enumerate}
\end{lemma}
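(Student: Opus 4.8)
The plan is to prove each of the four bi-implications independently, since they are all statements about list membership that should follow directly from the definitions of \Verb|_++_| and \Verb|_-_| together with the inductive structure of lists. In a proof assistant like Agda these are the kind of lemmas one establishes by recursion on the list argument, so I would organize each proof as an induction on \Verb|xs|, treating the empty list and the cons cases separately, and in each direction of the bi-implication.

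For \Verb|appIn| and \Verb|appNotIn| I would proceed by induction on \Verb|xs|. In the base case \Verb|xs = []| concatenation reduces to \Verb|ys|, so \Verb|x ∈ ys| and the disjunction collapses to its right branch (membership in the empty list is absurd); this direction is immediate. In the cons case \Verb|xs = z ∷ zs| I would unfold the definition of membership, which presents \Verb|x ∈ z ∷ (zs ++ ys)| as the choice between \Verb|x ≡ z| and \Verb|x ∈ zs ++ ys|, and then appeal to the inductive hypothesis on \Verb|zs| to redistribute the disjunction. The statement \Verb|appNotIn| is essentially the De Morgan dual of \Verb|appIn|, so rather than redo the induction I would derive it from \Verb|appIn| by negating both sides, using decidability of equality on \Var\ to turn the negated disjunction into a conjunction of negations.

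For \Verb|delIn| and \Verb|delNotIn| the reasoning is analogous but now centred on the deletion function \Verb|_-_|. Again by induction on \Verb|xs|, the empty case is trivial, and in the cons case \Verb|xs = z ∷ zs| the definition of \Verb|_-_| branches on whether \Verb|z ≡ y|: if they are equal the head is dropped and the result follows from the inductive hypothesis, while if they differ the head is kept and I would combine the head comparison \Verb|x ≡ z| with the hypothesis, using \Verb|z ≢ y| to discharge the \Verb|x ≢ y| obligation when \Verb|x ≡ z|. As before, \Verb|delNotIn| I would obtain as the contrapositive of \Verb|delIn|, so the real work lives in the positive statements.

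I do not anticipate a deep obstacle here, as each lemma is a routine structural induction; the statements are deliberately phrased as characterizations so that the type system's derivations of \free{x}{M} and \fresh{x}{M} can later be built and taken apart compositionally. The only point demanding slight care is the interplay between classical-looking De Morgan manipulations and their constructive realization: passing from \Verb|appIn| to \Verb|appNotIn|, and from \Verb|delIn| to \Verb|delNotIn|, requires that equality on \Var\ be decidable so that the negations can be pushed through the disjunctions without appealing to excluded middle. Since decidability of propositional equality on \Var\ is assumed in the setup of the framework, this causes no difficulty, and the four items can be discharged in short order.
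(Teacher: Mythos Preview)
Your proposal is correct and follows the standard approach; the paper itself states this lemma without proof, treating it as routine infrastructure, so there is nothing to compare against beyond noting that structural induction on \texttt{xs} is exactly what one would expect in the accompanying Agda development. One minor remark: the De~Morgan step for \texttt{appNotIn} does not actually require decidability of equality on \Var, since $\neg(A \uplus B) \leftrightarrow \neg A \times \neg B$ holds constructively; decidability is only genuinely needed in the forward direction of \texttt{delNotIn}, where you must case-split on whether $x \equiv y$ to produce the disjunct.
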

\noindent where \Verb|_⊎_| is disjoint union, \Verb|_×_| non-dependent product 
and \Verb|A ↔ B = (A → B) × (B → A)|. 

\subsection{The Choice Function}

To implement Stoughton's substitution operation, first we need to define a function that \textit{chooses} a fresh name for a list of given names.
This list will contain the free names in the image of the substitution which need to be avoided when renaming the bound variables. 

In previous version of the framework, names have been identified with the natural numbers, so the choice function had type:
\Verb|χ' : List ℕ → ℕ|.
There it was shown that such function actually returns a number fresh for the input list:
\begin{lemma}
\label{lemma:chiPrimeOriginal}
\Verb|xpfresh : ∀ xs → χ' xs ∉ xs|
\end{lemma}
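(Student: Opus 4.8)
The plan is to exploit the concrete structure at hand: here \Var\ is identified with \Nat, so the simplest way to obtain a number fresh for a finite list is to return one more than its largest element. Accordingly I expect \Verb|χ'| to be defined by \Verb|χ' xs = suc (max xs)|, where \Verb|max : List ℕ → ℕ| folds the binary maximum over the list with base case \Verb|max [] = 0|. Under this definition the returned value strictly exceeds every element of the list, so freshness will follow immediately.

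The key auxiliary step is a monotonicity lemma stating that every member of a list is bounded above by its maximum: if $x \in \mathit{xs}$ then $x \leq \mathsf{max}\,\mathit{xs}$. I would prove this by structural induction on the list. The base case $\mathit{xs} = []$ is vacuous, since membership in the empty list is uninhabited. For $\mathit{xs} = y \mathbin{::} \mathit{ys}$ I would case-split on the membership proof of $x \in (y \mathbin{::} \mathit{ys})$, obtaining either $x = y$ or $x \in \mathit{ys}$; in the first case $x = y \leq y \sqcup \mathsf{max}\,\mathit{ys}$, and in the second the induction hypothesis gives $x \leq \mathsf{max}\,\mathit{ys} \leq y \sqcup \mathsf{max}\,\mathit{ys}$. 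Both subcases close using the standard facts that each operand of the binary maximum lies below the result.

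Granting the bound, the main statement follows by contradiction. If \Verb|χ' xs ∈ xs|, then the monotonicity lemma yields $\mathsf{χ'}\,\mathit{xs} \leq \mathsf{max}\,\mathit{xs}$; but by definition $\mathsf{χ'}\,\mathit{xs} = \mathsf{suc}\,(\mathsf{max}\,\mathit{xs})$, so we would obtain $\mathsf{suc}\,(\mathsf{max}\,\mathit{xs}) \leq \mathsf{max}\,\mathit{xs}$, contradicting the irreflexivity fact $n < \mathsf{suc}\,n$. Hence \Verb|χ' xs ∉ xs|.

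I do not anticipate a substantive obstacle; the only delicate point is matching the elimination of membership over a cons with the precise presentation of \Verb|∈| the framework uses, which is exactly the list bookkeeping that \cref{lemma:genList} is designed to streamline. Were \Verb|χ'| instead defined as the least natural number absent from the list, the same conclusion would hold, but the argument would require a well-founded rather than structural recursion; the maximum-based definition is preferable precisely because it keeps the proof structural.
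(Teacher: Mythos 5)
Your proof is correct, but there is nothing in the paper to compare it against: the paper states this lemma as a result imported from the earlier, naturals-only version of the framework (\cite{tasistro2015}) and gives no proof of its own. Your reconstruction---defining $\chi'\,xs$ as the successor of the maximum of $xs$, proving by structural induction on the list that membership implies being bounded by that maximum, and then refuting $\mathsf{suc}\,(\mathsf{max}\,xs) \leq \mathsf{max}\,xs$ by irreflexivity---is exactly the standard argument used in that prior development, so your approach coincides with the proof the paper implicitly relies on. One caveat is worth naming: since the paper never fixes the definition of $\chi'$ (it only gives its type, \Verb|List ℕ → ℕ|), your argument is strictly speaking a proof about your chosen definition; this is harmless, because any definition returning a strict upper bound of its input supports the same two-step argument, and your closing remark that a ``least absent number'' definition would instead require well-founded rather than structural recursion is accurate and is precisely why the max-based definition is the one used in practice.
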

\noindent Now in this development, the names in the list have an abstract type $\mathcal{V}$, hence we also need a more abstract choice function. To this end, we define a new function $X'$ such that, given a list $xs$, it encodes $xs$, then calls the previous function and finally decodes the result. Formally:

\begin{Verbatim}
Χ' : List 𝒱 → 𝒱
Χ' xs = from (χ' (map to xs))
\end{Verbatim}
It follows that this new function also selects a fresh name for the input list:

\begin{lemma}
\label{lemma:chiPrime}
\Verb|Xpfresh : ∀ xs → X' xs ∉ xs|
\end{lemma}

\begin{proof}
By \cref{lemma:chiPrimeOriginal} 
and by the fact that \Verb|decode| is the right inverse of \Verb|encode|.
\end{proof}

\subsection{Substitutions}

Substitutions are functions from variables to terms:
\begin{Verbatim}
Sub = 𝒱 → Λ
\end{Verbatim}
\noindent We shall use the letter $\sigma$ possibly with primes to range them.
We define \Verb|ι| as the \textit{identity} substitution, 
and an \textit{update} operation on substitutions \Verb|_,_:=_ : Sub → 𝒱 → Λ → Sub| such that for any given substitution $\sigma$, $\lambda$-term $N$ and names $x$ and $y$, $(\sigma,x:=N)\,y$ yields $N$ if $x$ equals to $y$ and $\sigma y$ otherwise.

To reason about substitutions it turns out convenient to confine their domain to finite subsets of them.
In previous work, the authors introduced the \emph{restriction} type, whose objects were just pairs of substitutions and $\lambda$-terms $(\sigma,M)$ and which would be used anywhere it was required to reason about the extension of $\sigma$ only up to the free names in $M$. 
In our case, however, we shall that see we will need a more flexible definition, therefore we define restrictions as pairs of substitutions and \emph{list of names}:
\begin{Verbatim}
Res = Sub × List 𝒱
\end{Verbatim}
For instance, we can use them to extend the choice function to substitutions:
\begin{Verbatim}
X : Res → 𝒱
X (σ , xs) = X' (concat (map (fv ∘ σ) xs))
\end{Verbatim}
\noindent where \Verb|concat| is the function that flattens a list of lists.

With our machinery for restrictions we can prove this extended choice function \Verb|X| also returns a fresh name.
Let the relations of free and fresh variables be extended to restrictions by:
\begin{Verbatim}
x $*$⇂ (σ , xs) = ∃ λ y → y ∈ xs × x ∈ fv (σ y) 
x #⇂ (σ , xs) = ∀ y → y ∈ xs → x # σ y 
\end{Verbatim}
Then we have that $X(\sigma,xs)$ is fresh for every $y \in xs$:
\begin{lemma}
\label{lemma:chi}
\Verb|Xfresh : ∀ σ xs → X (σ , xs) #⇂ (σ , xs)|
\end{lemma}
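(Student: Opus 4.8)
The plan is to transport the freshness already delivered by \cref{lemma:chiPrime} through the concatenation that appears in the definition of \Verb|X|. Write \Verb|z| for \Verb|X (σ , xs)|, so that by definition \Verb|z ≡ X' (concat (map (fv ∘ σ) xs))|. Applying \cref{lemma:chiPrime} to the list \Verb|concat (map (fv ∘ σ) xs)| immediately gives that \Verb|z| is absent from it, i.e. \Verb|z ∉ concat (map (fv ∘ σ) xs)|. This is the only point at which the concrete construction of the choice function is used.

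Next I would unfold the goal. By the definition of \freshrg{} on restrictions, proving \Verb|X (σ , xs) #⇂ (σ , xs)| amounts to fixing an arbitrary \Verb|y| with \Verb|y ∈ xs| and establishing \Verb|z # σ y|, that is \Verb|z ∉ fv (σ y)|. After this fixing, the task is to show that the absence of \Verb|z| from the whole concatenation descends to its absence from the individual block \Verb|fv (σ y)| contributed by the index \Verb|y|.

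The bridge between the two lists is a standard membership fact for \Verb|concat ∘ map|, which I would isolate as an auxiliary lemma in the negative (freshness-friendly) shape: for every \Verb|f|, if \Verb|z ∉ concat (map f xs)| then \Verb|z ∉ f y| for every \Verb|y| with \Verb|y ∈ xs|. Instantiating \Verb|f| with \Verb|fv ∘ σ| and feeding in the conclusion of \cref{lemma:chiPrime} yields precisely \Verb|z ∉ fv (σ y)| for each such \Verb|y|, which is the goal.

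It remains to prove that auxiliary lemma, and this is the only real work; everything else is unfolding definitions and a single appeal to \cref{lemma:chiPrime}. I expect it to follow by a routine induction on \Verb|xs|. The base case is vacuous, since \Verb|concat (map f [])| is \Verb|[]| and no \Verb|y| satisfies \Verb|y ∈ []|. In the step case, writing \Verb|y'| and \Verb|ys| for the head and tail of \Verb|xs|, the list \Verb|concat (map f xs)| unfolds to \Verb|f y' ++ concat (map f ys)|, so the hypothesis splits, via \Verb|appNotIn| of \cref{lemma:genList}, into \Verb|z ∉ f y'| and \Verb|z ∉ concat (map f ys)|; the first discharges the subcase \Verb|y ≡ y'| and the induction hypothesis applied to the second discharges the subcase \Verb|y ∈ ys|. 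This membership lemma may well already be available in Agda's standard library, in which case the whole proof collapses to the two unfoldings together with \cref{lemma:chiPrime}.
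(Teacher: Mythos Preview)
Your proposal is correct and is precisely the approach the paper takes: its proof reads ``By using lists properties and \cref{lemma:chiPrime},'' and your argument spells out exactly which list property is needed (the \texttt{concat}/\texttt{map} membership fact, discharged via \texttt{appNotIn}) and how it combines with \cref{lemma:chiPrime}.
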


\begin{proof}
By using lists properties and \cref{lemma:chiPrime}.
\end{proof}

Without further ado, we define by recursion on the syntax the substitution operation, which given any substitution $\sigma$ and $\lambda$-term $M$, it replaces the free names in $M$ by their corresponding images in $\sigma$ while renaming the bound variables at the same time to prevent any possible name capture:
\begin{Verbatim}
_∙_ : Λ → Sub → Λ
c k ∙ _ = c k
v x ∙ σ = σ x
M · N ∙ σ = (M ∙ σ) · (N ∙ σ)
λ[ x ∶ A ] M ∙ σ = λ[ y ∶ A ∙ σ ](M ∙ σ , x := v y) where y = X (σ , fv M - x)  
Π[ x ∶ A ] B ∙ σ = Π[ y ∶ A ∙ σ ](B ∙ σ , x := v y) where y = X (σ , fv B - x)
\end{Verbatim}
In the equation for $\lambda$-abstractions we could have defined 
\Verb|y = X (σ , λ[ x ∶ A ] M)|, 
and analogously in the case of $\Pi$-types, and therefore saved us the trouble of revisiting the definition of restrictions.
Nevertheless, we find this to be rather unsatisfactory because in that case we would be excluding from consideration those names that occur free in the image of $A$, but not necessarily in that of $\fv{M}-x$, and which are safe candidates to rename the bound variable.
Hence we would not be choosing the \emph{first} name available but \emph{some} one.
We take the view that selecting the first name available is the best solution because it is the most straightforward way to formally specify such operation.
Besides, this way we are aligned with the formal treatment of the $\lambda$-calculus in the literature \cite{curry1958,hindley1986}.

We shall use the next abbreviation for unary substitution:
\Verb|M [ x := N ] = M ∙ ι , x := N|.

The first result we have concerning the substitution operation is that it is actually capture-avoiding:
\begin{lemma}
\label{lemma:nocapture}
\Verb|noCapture : ∀ {x M σ} → x ∈ fv (M ∙ σ) ↔ x $*$⇂ (σ , fv M)|
\end{lemma}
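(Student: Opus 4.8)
The plan is to prove both directions of the biconditional at once by structural induction on $M$, following the recursive clauses of $\sub{\cdot}{\cdot}$. The base cases are immediate. For $M = c\,k$ both sides are uninhabited, since $\fv{(c\,k)}$ is empty, so the biconditional holds vacuously. For $M = v\,z$ we have $\sub{(v\,z)}{\sigma} = \sigma\,z$ and $\fv{(v\,z)} = [z]$, so the left-hand side is $x \in \fv{(\sigma\,z)}$, while the right-hand side $\exists y \in [z].\,x \in \fv{(\sigma\,y)}$ collapses to the same statement by taking $y = z$.

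For the application case $M = P\cdot Q$, I would unfold $\fv{(\sub{(P\cdot Q)}{\sigma})} = \fv{(\sub{P}{\sigma})}\appendg\fv{(\sub{Q}{\sigma})}$ and use \cref{lemma:genList} (appIn) to turn membership in a concatenation into a disjunction; the two induction hypotheses rewrite each disjunct, and on the right-hand side the existential witness is routed into the $P$- or $Q$-component by the same lemma. So the constant, variable and application cases need nothing beyond the list lemmas and the IH.

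The $\lambda$-abstraction case (and, identically, the $\Pi$-type case) is where the real work lies. Writing $M = \lambda[z\!:\!A]\,P$ and $y_0 = X(\sigma, \fv P - z)$ for the freshly chosen name, the definition gives $\fv{(\sub{M}{\sigma})} = \fv{(\sub{A}{\sigma})}\appendg(\fv{(\sub{P}{(\sigma,z:=v\,y_0)})} - y_0)$. Decomposing with appIn and delIn and applying the IH to $A$ and to $P$ reduces the left-hand side to
\[
x\,{*}{\downharpoonright}\,(\sigma, \fv A)\ \lor\ \bigl(x\neq y_0\ \land\ x\,{*}{\downharpoonright}\,((\sigma,z:=v\,y_0), \fv P)\bigr).
\]
On the right-hand side, $\fv M = \fv A\appendg(\fv P - z)$, so the same two list lemmas split $x\,{*}{\downharpoonright}\,(\sigma,\fv M)$ into $x\,{*}{\downharpoonright}\,(\sigma,\fv A)$ or $\exists w\in\fv P.\,w\neq z \land x\in\fv{(\sigma\,w)}$. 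The $A$-summands coincide verbatim, so it only remains to match the two body-summands. Evaluating the updated substitution, $(\sigma,z:=v\,y_0)\,w$ is $v\,y_0$ when $w = z$ and $\sigma\,w$ otherwise; hence the body-summand on the left says precisely that some $w\in\fv P$ either equals $z$ with $x = y_0$ (ruled out by the side condition $x\neq y_0$) or differs from $z$ with $x\in\fv{(\sigma\,w)}$.

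The forward implication is then immediate: from the left summand one simply forgets the constraint $x\neq y_0$. The backward implication is the main obstacle, because there one must \emph{re-derive} $x\neq y_0$ from the mere existence of some $w\in\fv P$, $w\neq z$, with $x\in\fv{(\sigma\,w)}$. This is exactly what the freshness of the chosen name buys us: by \cref{lemma:chi}, $y_0 = X(\sigma,\fv P - z)$ satisfies $\freshr{y_0}{(\sigma,\fv P - z)}$, i.e.\ $y_0\notin\fv{(\sigma\,w)}$ for every $w\in\fv P - z$. Since our $w$ lies in $\fv P - z$ and $x\in\fv{(\sigma\,w)}$, we conclude $x\neq y_0$, closing the backward direction. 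The $\Pi$-type case is the same argument with $B$ in place of $P$, which completes the induction.
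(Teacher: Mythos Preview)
Your argument is correct. The induction is set up properly, the list lemmas from \cref{lemma:genList} are applied in the right places, and the binder case is handled cleanly: you correctly invoke the induction hypothesis on $P$ with the \emph{updated} substitution $(\sigma,z:=v\,y_0)$ (legitimate since the statement is universal in $\sigma$), and the use of \cref{lemma:chi} to recover $x\neq y_0$ in the backward direction is exactly the point where the freshness of the chosen name is consumed.

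As for comparison: the paper does not spell out a proof for this lemma; it is stated and commented on, with the details deferred to the Agda sources. The natural mechanized proof is precisely the structural induction you describe, so there is no meaningful divergence to discuss.
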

\noindent Read right-to-left: if $x$ is free in the image of $\sigma$ then it will remain so in the final result. 
The other direction is important as well, and it establishes that no free names are introduced by the operation other than those occurring in the image of the $\sigma$ (restricted to the free names in $M$).
 
Next we have some other nice properties on substitutions that we will use thoroughly in the next sections.
First, let equality on restrictions be defined by:
\begin{Verbatim}
(σ , xs) ≅⇂ (σ' , xs') = (xs ⊆ xs' × xs' ⊆ xs) × ∀ x → x ∈ xs → σ x ≡ σ' x
\end{Verbatim}
where list inclusion \Verb|_⊆_| is defined in the standard library by:
\Verb|xs ⊆ ys = ∀ {x} → x ∈ xs → x ∈ ys|.
As a special case of equality on restrictions we define: 
\Verb|σ ≅ σ' ⇂ xs = (σ , xs) ≅⇂ (σ' , xs)|.
Also, let composition of substitutions be defined by:
\begin{Verbatim}
(σ ⊙ σ') x = σ' x ∙ σ
\end{Verbatim}
Then we have the next results which are extended from \cite{copello2016,tasistro2015,urciuoli2023}:
\begin{lemma}
    \label{lemma:substitutionProperties}
    \begin{lemenum}
        \item 
        \label{lemma:subEqRes}
        \Verb|subEqRes : ∀ {M σ σ'} → σ ≅ σ' ⇂ fv M → M ∙ σ ≡ M ∙ σ'|
        
        \item
        \label{lemma:updFresh}
        \Verb|updFresh : ∀ {x M N σ} → x # M → M ∙ σ , x := N ≡ M ∙ σ|
        
        \item 
        \label{lemma:exp}
        \Verb|composRenUpd : ∀ {x z M N σ} → z ∉ fv M - x|\\
        \Verb| → M ∙ σ , x := N ≡ M [ x := v z ] ∙ σ , z := N|
        
        \item
        \label{lemma:subDistribUpd}
        \Verb|subDistribUpd : ∀ {M N σ x} → M ∙ σ , x := (N ∙ σ) ≡ M [ x := N ] ∙ σ|
        
        \item
        \label{lemma:subComp}
        \Verb|subComp : ∀ {M σ σ'} → M ∙ σ ∙ σ' ≡ M ∙ σ' ⊙ σ|        
    \end{lemenum}
\end{lemma}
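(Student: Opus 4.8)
The plan is to prove the five statements of \cref{lemma:substitutionProperties} by mutual/sequential structural induction on $M$, establishing them in the order given, since each later item tends to depend on earlier ones. The abstractions and $\Pi$-types are the only nontrivial cases because they involve the choice function $X$ and the update operation, so these are where essentially all the work lies; the constant, variable, and application cases reduce to routine unfolding of the definition of \Verb|_∙_| together with the list lemmas of \cref{lemma:genList}.

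For \cref{lemma:subEqRes} I would induct on $M$. In the $\lambda$-case, unfolding the substitution gives $\lambda[y_1 : A\bullet\sigma](M\bullet\sigma,x{:=}\var{y_1})$ versus $\lambda[y_2 : A\bullet\sigma'](M\bullet\sigma',x{:=}\var{y_2})$ where $y_i = X(\sigma_i,\fv M - x)$. The key sublemma here is that the chosen name is invariant under restricted equality: since $\sigma \cong \sigma' \downharpoonright \fv M$ restricted further to $\fv M - x$ makes $\sigma$ and $\sigma'$ agree on exactly the list feeding $X$, we get $y_1 = y_2$ by the definition of $X$ as a function of $(\sigma,xs)$. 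Then I apply the induction hypothesis to the bodies after checking that the updated substitutions $\sigma,x{:=}\var{y}$ and $\sigma',x{:=}\var{y}$ agree on $\fv M$, using \cref{lemma:genList}.(ii) and (iv) to handle the $x$-versus-non-$x$ split. \Cref{lemma:updFresh} is a direct induction where, in the abstraction case, $\fresh{x}{(\lambda[z:A]M)}$ unpacks via the free-variable equations and \cref{lemma:genList}.(iii),(iv) into freshness facts about $A$ and about $M$ relative to the binder, feeding the induction hypotheses.

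\Cref{lemma:exp} (\Verb|composRenUpd|) is the step I expect to be the main obstacle, as it is the one genuinely asserting a nontrivial interaction between renaming through the update and composing with a fresh intermediate variable $z$. The hypothesis $z \notin \fv M - x$ is exactly what guarantees the renaming $M[x{:=}\var z]$ does not clobber an existing free occurrence, so unfolding $M[x{:=}\var z] = M\bullet\iota,x{:=}\var z$ and then applying $\bullet\,\sigma,z{:=}N$ must be reconciled with the single update $\sigma,x{:=}N$. I anticipate needing \cref{lemma:subEqRes} together with a careful case analysis on whether a given free variable of $M$ equals $x$, and reasoning that the two sides act identically on each such variable; the freshness of $z$ prevents the intermediate $\var z$ from being captured or re-substituted incorrectly. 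The remaining two items then follow more mechanically: \cref{lemma:subDistribUpd} is the specialization/consequence of \cref{lemma:exp} and \cref{lemma:subEqRes} obtained by pushing $N\bullet\sigma$ through the update, and \cref{lemma:subComp} is the standard substitution-composition lemma proved by induction on $M$, with the binder case again hinging on showing the freshly chosen names on both sides coincide (or can be aligned via \cref{lemma:subEqRes}) and that composition commutes with the update, i.e.\ $(\sigma'\odot\sigma),x{:=}\var y$ behaves as $\sigma',x{:=}\var y \odot (\sigma,x{:=}\var y)$ on the relevant domain.
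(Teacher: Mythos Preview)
The paper does not actually give a proof of this lemma: it simply states that the five items ``are extended from \cite{copello2016,tasistro2015,urciuoli2023}''. So there is no detailed argument to compare against, and your sketch is already more explicit than what the paper provides. Your overall shape---structural induction on $M$, with all the real work in the binder cases because of the choice function $X$---is exactly the standard Stoughton-style development carried out in those references.

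That said, there is one ordering issue worth flagging. You propose to prove the items in the stated order, but (iii) \Verb|composRenUpd| and (iv) \Verb|subDistribUpd| are most naturally \emph{corollaries} of (v) \Verb|subComp| together with (i) \Verb|subEqRes|. Concretely, for (iii) the right-hand side is $M[x{:=}\var z]\bullet(\sigma,z{:=}N) = (M\bullet(\iota,x{:=}\var z))\bullet(\sigma,z{:=}N)$; without (v) you cannot collapse this iterated substitution into a single $M\bullet\tau$, and therefore (i) cannot be applied directly---the two sides are substitutions applied to \emph{different} terms. Your plan of ``using \Verb|subEqRes| together with a case analysis on free variables'' presupposes that both sides are of the form $M\bullet(-)$, which is only the case after invoking (v). A direct induction on $M$ for (iii) without (v) is possible in principle but unpleasant, since in the binder case you must track two different freshly chosen names (one from the inner renaming, one from the outer substitution). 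The clean dependency order is (i), (ii), (v), and then (iii) and (iv) drop out by composing and checking the resulting substitutions agree on $\fv M$; the cited Stoughton-style developments proceed this way.
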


\subsection{Alpha-conversion}
\label{sec:alpha}
We define $\alpha$-conversion inductively by:
\begin{Verbatim}
data _∼α_ : Λ → Λ → Set where
  ∼c : ∀ {k} → c k ∼α c k
  ∼v : ∀ {x} → v x ∼α v x
  ∼· : ∀ {M M' N N'} → M ∼α M' → N ∼α N' → M · N ∼α M' · N'
  ∼λ : ∀ {x x' y A A' M M'} → A ∼α A' → y ∉ fv M - x → y ∉ fv M' - x' 
     → M [ x := v y ] ≡ M' [ x' := v y ] → λ[ x ∶ A ] M ∼α λ[ x' ∶ A' ] M'
  ∼Π : ∀ {x x' y A A' B B'} → A ∼α A' → y ∉ fv B - x → y ∉ fv B' - x' 
     → B [ x := v y ] ≡ B' [ x' := v y ] → Π[ x ∶ A ] B ∼α Π[ x' ∶ A' ] B'
\end{Verbatim}

\noindent In the rule for $\lambda$-abstractions and $\Pi$-types we require that the bodies are syntactical equal once their respective bound names have been replaced by a common fresh name. In former work the premises were stated in terms of the very same relation being defined. 
Later in this section and by having a robust infrastructure we will show a posteriori
that both definitions for $\alpha$-conversion are equivalent.

By using this formulation we can show the following result which is central in the theory of $\alpha$-conversion with simpler methods than in previous work:
\begin{lemma}
\label{lemma:characterizatinoAlphaIota}
\Verb|iotaAlpha : ∀ {M M'} → M ∙ ι ≡ M' ∙ ι → M ∼α M'|
\end{lemma}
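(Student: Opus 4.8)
The plan is to proceed by structural induction on $M$, keeping $M'$ universally quantified so that the induction hypothesis applies to every $M'$. The guiding observation is that applying the identity substitution preserves the top-level constructor of a term: $\sub{M}{\iota}$ is a constant (resp.\ variable, application, $\lambda$-abstraction, $\Pi$-type) exactly when $M$ is. Hence from the hypothesis $\sub{M}{\iota} \equiv \sub{M'}{\iota}$ together with the shape of $M$ I can read off the matching shape of $M'$; concretely this amounts to case-splitting on $M'$ as well and discarding every cross-case by the clash of outermost constructors.

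In the base cases $M = c\,k$ and $M = \var{x}$ we have $\sub{M}{\iota} = c\,k$ and $\sub{M}{\iota} = \var{x}$ respectively, so injectivity of the constructors forces $M'$ to be the same term, and the goal follows at once from the rules \Verb|∼c| and \Verb|∼v|. For $M = M_1 \cdot M_2$ the hypothesis forces $M' = M_1' \cdot M_2'$, and by the definition $\sub{(M_1 \cdot M_2)}{\iota} = \sub{M_1}{\iota} \cdot \sub{M_2}{\iota}$ together with injectivity of application it splits into $\sub{M_1}{\iota} \equiv \sub{M_1'}{\iota}$ and $\sub{M_2}{\iota} \equiv \sub{M_2'}{\iota}$; two appeals to the induction hypothesis and the rule \Verb|∼·| close the case.

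The interesting cases are the binders; I describe $\lambda$-abstraction, $\Pi$-types being identical. Write $M = \lambda[x{:}A]\,M_0$ and, by shape preservation, $M' = \lambda[x'{:}A']\,M_0'$. By definition $\sub{M}{\iota} = \lambda[y{:}\sub{A}{\iota}]\,(M_0[x := \var{y}])$ with $y = X(\iota, \fv{M_0}-x)$, and symmetrically $\sub{M'}{\iota} = \lambda[y'{:}\sub{A'}{\iota}]\,(M_0'[x' := \var{y'}])$ with $y' = X(\iota, \fv{M_0'}-x')$, where I have already folded the bodies back into the unary-substitution abbreviation, which is definitional. Injectivity of the $\lambda$ constructor then yields three facts: $y \equiv y'$, $\sub{A}{\iota} \equiv \sub{A'}{\iota}$, and $M_0[x := \var{y}] \equiv M_0'[x' := \var{y'}]$. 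Rewriting the last equality by $y \equiv y'$ produces exactly the fourth premise of the rule \Verb|∼λ|. The remaining premises are discharged as follows: $A \alphaconv A'$ comes from the induction hypothesis applied to $\sub{A}{\iota} \equiv \sub{A'}{\iota}$, since $A$ is a structural subterm of $M$; and the freshness conditions $y \notin \fv{M_0}-x$ and $y \notin \fv{M_0'}-x'$ follow from the freshness of the choice function (\cref{lemma:chi})---the first directly since $y = X(\iota, \fv{M_0}-x)$, and the second after rewriting by $y \equiv y'$ since $y' = X(\iota, \fv{M_0'}-x')$, using that for the identity substitution $X(\iota, xs)$ avoids every element of $xs$. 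Assembling these with \Verb|∼λ| concludes the case.

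I expect two points to carry the weight of the argument. The first, and the only genuinely laborious one, is the shape-preservation step: each impossible cross-case is routine, but there are many of them, and they are precisely what lets me infer the constructor of $M'$ from that of $M$ before decomposing the equality. The second is not an obstacle but the reason the proof is short: the new $\equiv$-based formulation of the binder rule asks only for the syntactic equality $M_0[x:=\var{y}] \equiv M_0'[x':=\var{y}]$ of the renamed bodies---exactly the equation delivered by injectivity---so no recursive call on the body is required, and the induction hypothesis is consumed only on the type annotation $A$ (and, in the application case, on the two immediate subterms). This is the simplification over previous work, where phrasing the binder premise in terms of $\alphaconvg$ itself forced a more delicate induction.
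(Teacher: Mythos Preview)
Your proposal is correct and follows essentially the same approach as the paper's proof: structural induction on $M$ with case analysis on $M'$, using injectivity of constructors in the binder case to obtain $z\equiv z'$, $\sub{A}{\iota}\equiv\sub{A'}{\iota}$, and the equality of renamed bodies, then invoking the IH only on the type annotation $A$ and closing with the $\sim\!\lambda$ rule. You are merely more explicit than the paper about discharging the freshness side-conditions of $\sim\!\lambda$ via \cref{lemma:chi}, which the paper leaves implicit under ``sufficiently fresh''.
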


\newcommand{\subs}{\ensuremath{\,\bullet\,}}

\begin{proof}
By structural induction on $M$ and subordinate case analysis on $M'$.
We only show the sub-case for the $\lambda$-abstractions which did not follow by structural induction before. The respective sub-case for $\Pi$-types is analogous. There, given $\lambda[x:A]M\subs\iota \equiv \lambda[x':A']M'\subs\iota$, we must show that $\lambda[x:A]M \alphaconv \lambda[x':A']M'$. We proceed as follows. First, by definition of substitution we have that our hypothesis is definitional equal to $\lambda[z:A
\subs\iota]M[x:=z] \equiv \lambda[x':A'\subs\iota]M'[x':=z']$ for some sufficiently fresh names $z$ and $z'$. Next, by injectivity of the constructors we know that $z \equiv z'$, $A\subs\iota \equiv A'\subs\iota$ and $M[x:=z] \equiv M'[x':=z']$. Then, by the inductive hypothesis on $A$ we have $A \alphaconv A'$. 
And finally, by the rule of $\lambda$-abstractions we can derive the desired goal. 
\end{proof}

\noindent Notice that we did not have to use the induction hypothesis on $M[x:=y]$. 
Otherwise we would have to use some method other than structural induction, e.g., complete induction on the length of $M$ as in \cite{tasistro2015}, since $M[x:=y]$ is not a proper component of $\lambda[x:A]M$.

Next we have some properties about $\alpha$-conversion that are extended quite directly from previous work and which we are going to need later on .
Let $\alpha$-conversion be extended to restrictions by:
\begin{Verbatim}
σ ∼α σ' ⇂ xs = ∀ x → x ∈ xs → σ x ∼α σ' x
\end{Verbatim}
Then:
\begin{lemma}
    \label{lemma:alphaProperties}
    \begin{lemenum}
        \item 
        \label{lemma:equalization}
        \Verb|compatSubAlpha : ∀ {M M' σ} → M ∼α M' → M ∙ σ ≡ M' ∙ σ|
        
        \item
        \label{lemma:subAlpha}
        \Verb|subAlpha : ∀ {M σ σ'} → σ ∼α σ' ⇂ fv M → M ∙ σ ∼α M ∙ σ'|
        
        \item
        \label{lemma:alphaCongruence}
        \alphaconv\ is an equivalence.
        
        \item
        \label{lemma:factor}
        \Verb|composRenUnary : ∀ {x y σ M N} → y #⇂ (σ , fv M - x)|\\
        \Verb| → (M ∙ σ , x := v y) [ y := N ] ∼α M ∙ σ , x := N|
    \end{lemenum}
\end{lemma}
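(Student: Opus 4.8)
The plan is to establish \cref{lemma:equalization} first, by induction on the derivation of \alphaconv, and then to read off the other three items from it together with the characterization of \cref{lemma:characterizatinoAlphaIota}. The point is that \cref{lemma:equalization} instantiated at $\sigma = \iota$ is exactly the converse of \cref{lemma:characterizatinoAlphaIota}; the two combined give the biconditional $M \alphaconv M' \Leftrightarrow M \subs \iota \equiv M' \subs \iota$, which I will use as the workhorse for everything that follows. So the real content of the lemma lives in item (i).

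For \cref{lemma:equalization} the cases for constants, variables and applications are immediate (the last one by the induction hypotheses on the two subterms). In the case for $\lambda$-abstractions (the $\Pi$-case being identical) I am given $A \alphaconv A'$, $y \notin \fv{M} - x$, $y \notin \fv{M'} - x'$ and $M[x := \var{y}] \equiv M'[x' := \var{y}]$, and must prove $\lambda[x:A]M \subs \sigma \equiv \lambda[x':A']M' \subs \sigma$. Unfolding the definition of substitution, the two sides become $\lambda[w:A\subs\sigma](M \subs (\sigma,x := \var{w}))$ and $\lambda[w':A'\subs\sigma](M' \subs (\sigma,x' := \var{w'}))$ with $w = X(\sigma, \fv{M}-x)$ and $w' = X(\sigma, \fv{M'}-x')$. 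The type annotations agree by the induction hypothesis on $A$. For the bodies I would invoke \cref{lemma:exp} with the fresh name $y$ (legitimate precisely because $y \notin \fv{M}-x$ and $y \notin \fv{M'}-x'$) to rewrite $M \subs (\sigma,x := \var{w}) \equiv M[x := \var{y}] \subs (\sigma,y := \var{w})$ and symmetrically on the primed side; once $w = w'$ is known, the two right-hand sides are literally equal by the hypothesis $M[x := \var{y}] \equiv M'[x' := \var{y}]$, and the rule $\sim\lambda$ closes the goal.

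The main obstacle is therefore the equality $w = w'$. From $M[x := \var{y}] \equiv M'[x' := \var{y}]$ I get the same list of free names on both sides, and by \cref{lemma:nocapture} together with \cref{lemma:genList} (delNotIn) the set of names of $M[x := \var{y}]$ is $(\fv{M}-x)$ possibly augmented by $\{y\}$; since $y \notin \fv{M}-x$ and $y \notin \fv{M'}-x'$, discarding $y$ shows $\fv{M}-x$ and $\fv{M'}-x'$ coincide as sets. To pass from this set equality to $w = w'$ I need the auxiliary fact that $X(\sigma, xs)$ depends only on the \emph{set} of names occurring in $xs$ (ultimately because $\chi'$ returns the first unused number), which I would isolate as a small lemma; I expect this invariance to be the only genuinely delicate step, the rest being bookkeeping.

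The remaining items then fall out cheaply. \cref{lemma:alphaCongruence} is immediate: reflexivity, symmetry and transitivity of \alphaconv transport directly across the characterization to the corresponding properties of $\equiv$. For \cref{lemma:subAlpha}, I apply the characterization to $M\subs\sigma$ and $M\subs\sigma'$, rewrite $(M\subs\sigma)\subs\iota \equiv M\subs(\iota\odot\sigma)$ by \cref{lemma:subComp} (and likewise for $\sigma'$), and reduce by \cref{lemma:subEqRes} to proving $\iota\odot\sigma \cong \iota\odot\sigma' \downharpoonright \fv{M}$; for $z \in \fv{M}$ this is $\sigma z \subs \iota \equiv \sigma' z \subs \iota$, which follows from the hypothesis $\sigma z \alphaconv \sigma' z$ by \cref{lemma:equalization}. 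Finally, for \cref{lemma:factor} I rewrite $(M \subs (\sigma,x := \var{y}))[y := N] \equiv M \subs ((\iota,y := N) \odot (\sigma,x := \var{y}))$ by \cref{lemma:subComp} and apply \cref{lemma:subAlpha}, so it suffices to compare the two substitutions pointwise on $\fv{M}$: at $z = x$ both yield $N$ definitionally, while at $z \neq x$ (hence $z \in \fv{M}-x$) the freshness hypothesis gives $\fresh{y}{\sigma z}$, so \cref{lemma:updFresh} gives $\sigma z \subs (\iota,y := N) \equiv \sigma z \subs \iota$, and $\sigma z \subs \iota \alphaconv \sigma z$ holds since substitution by $\iota$ is $\alpha$-neutral (again a consequence of the characterization).
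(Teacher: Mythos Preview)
Your plan is correct and follows the standard Stoughton-style argument; the paper itself does not spell out a proof here, merely noting that these properties are ``extended quite directly from previous work,'' so there is little to compare against beyond confirming that your route is the natural one. One small slip: in the $\lambda$-case of item~(i) you say ``the rule $\sim\lambda$ closes the goal,'' but the goal there is the \emph{syntactic} equality $\lambda[x:A]M \subs \sigma \equiv \lambda[x':A']M' \subs \sigma$, so it is closed by congruence of the $\lambda$-constructor for $\equiv$, not by the $\alpha$-conversion rule. Your identification of the set-invariance of $X$ (ultimately of $\chi'$) as the one genuinely delicate ingredient is exactly right; this property does hold in the framework and is what makes $w = w'$ go through, after which the rest of your derivations of (ii)--(iv) via the characterization $M \alphaconv M' \Leftrightarrow M\subs\iota \equiv M'\subs\iota$ are clean and correct.
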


\newcommand{\refsublemma}[1]{Lemma~\ref{#1}}

\refsublemma{lemma:equalization} not only states that substitution is compatible with $\alpha$-conversion, but also that it \textit{equalizes} $\alpha$-convertible $\lambda$-terms due to the uniform renaming of the bound variables. Following \cite{stoughton88}, we will also say $\sigma$ puts $\lambda$-terms into $\sigma$-normal form with respect to $\alpha$-conversion or simply into $\sigma$-normal form.

Note that \refsublemma{lemma:factor} cannot be strengthened up to syntactical equality. On the left-hand side of \alphaconvg\ we have that the image of $\sigma$ for every name $z\not\equiv x$ is being $\iota$-normalized by the substitution $\iota , y := N$ (which has the same effect as $\iota$ since $y$ is fresh for every image in $\sigma$ by definition), while on the right-hand side it is not. This observation will have some repercussions on the proof of closure under substitution of the type system presented in \cref{sec:metatheory}.

\subsubsection{Soundness and Completeness of Alpha-conversion}

Let $\alphaconvg_s$ be defined as $\alphaconvg$ except that in the clause for $\lambda$-abstractions and $\Pi$-types we have the following premises for the bodies: $M[x:=y]\ \alphaconvg_s\ M'[x':=y]$ and $B[x:=y]\ \alphaconvg_s\ B'[x':=y]$ .
It follows that both definitions are extensionally equivalent.

As to the soundness of \alphaconvg, first we need the following result which can be easily adapted from previous work since it does not depend on any other result about $\alpha$-conversion nor substitutions (the proof is conducted by complete induction on the length of $M$ as already commented somewhere):

\begin{lemma}
\label{lemma:iotaAlphaSt}
\Verb|iotaAlphaSt : ∀ {M M'} → M ∙ ι ≡ M' ∙ ι → M ∼αₛ M'|
\end{lemma}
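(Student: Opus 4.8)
The plan is to mirror the argument for \cref{lemma:characterizatinoAlphaIota}, but with one essential change of induction principle. Since the recursive premise of $\alphaconvg_s$ in the abstraction and product clauses refers to the \emph{renamed body} $M[x:=z]$ rather than to $M$ itself, structural induction no longer closes the goal. Instead I would argue by complete induction on the length (the number of constructors) of $M$, with a subordinate case analysis on $M'$. The crucial auxiliary fact that makes this measure work is that a variable-for-variable substitution such as $\iota,x:=z$ preserves the length of a term, so that $M[x:=z]$ has exactly the same size as $M$; this must be available as a prerequisite (it follows by an easy structural induction, as the head constructor is preserved in every clause and variables are sent to variables).

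The constant, variable and application cases are routine. For $M \equiv c\,k$ the equation $M\subs\iota \equiv M'\subs\iota$ forces $M' \equiv c\,k$, and we conclude by the corresponding rule; the case $M \equiv v\,x$ is symmetric. For $M \equiv M_1 \cdot M_2$ the hypothesis forces $M'$ to be an application $M_1' \cdot M_2'$ with $M_1\subs\iota \equiv M_1'\subs\iota$ and $M_2\subs\iota \equiv M_2'\subs\iota$; since $M_1$ and $M_2$ are strictly shorter than $M$, the induction hypothesis applies to each, and we close with the application rule.

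The interesting case is $M \equiv \lambda[x:A]M_0$, the product case being identical. By definition of substitution, $M\subs\iota \equiv \lambda[z:A\subs\iota](M_0[x:=z])$ with $z = X(\iota,\fv{M_0}-x)$, and since the left-hand side is a $\lambda$-abstraction, $M'$ must have the form $\lambda[x':A']M_0'$, whence $M'\subs\iota \equiv \lambda[z':A'\subs\iota](M_0'[x':=z'])$ with $z' = X(\iota,\fv{M_0'}-x')$. Injectivity of the constructor then yields $z \equiv z'$, $A\subs\iota \equiv A'\subs\iota$, and the syntactic identity $M_0[x:=z] \equiv M_0'[x':=z]$. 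I would take $z$ itself as the common fresh name demanded by the $\alphaconvg_s$ abstraction rule: \cref{lemma:chi} gives $z \notin \fv{M_0}-x$ and, using $z\equiv z'$, also $z \notin \fv{M_0'}-x'$. For the domain premise, the induction hypothesis applied to $A$ (shorter than $M$) yields $A \alphaconvg_s A'$. For the body premise, the syntactic identity above trivially gives $(M_0[x:=z])\subs\iota \equiv (M_0'[x':=z])\subs\iota$; since $M_0[x:=z]$ has the same length as $M_0$ and hence is strictly shorter than $M$, the induction hypothesis applies and delivers $M_0[x:=z] \alphaconvg_s M_0'[x':=z]$. Assembling these three facts with the abstraction rule produces $M \alphaconvg_s M'$.

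The main obstacle, and the reason this lemma is kept apart from \cref{lemma:characterizatinoAlphaIota}, is precisely that we must recurse on $M_0[x:=z]$, which is not a component of $\lambda[x:A]M_0$; this is what forces complete induction on length and what makes the length-invariance of renaming substitutions a genuine prerequisite rather than a throwaway remark. Everything else is bookkeeping: discharging the freshness side conditions through \cref{lemma:chi} and threading the equality $z \equiv z'$ so that a single name can serve for both bodies.
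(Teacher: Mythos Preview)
Your proposal is correct and matches the paper's approach: the paper explicitly states that the proof ``is conducted by complete induction on the length of $M$'', which is exactly what you do, and your identification of the length-invariance of renaming substitutions as the key auxiliary fact is on point.
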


\begin{theorem}
\Verb|soundAlpha : ∀ {M N} → M ∼α N → M ∼αₛ N|
\end{theorem}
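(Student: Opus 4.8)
The plan is to route the proof through equality of $\iota$-normal forms, exploiting that the two relations are pinned down by such equalities from opposite directions. The key observation is that \cref{lemma:equalization} already tells us that substitution \emph{equalizes} $\alpha$-convertible terms, so the new relation $\alphaconvg$ collapses to syntactic equality once both sides are put in $\iota$-normal form; and \cref{lemma:iotaAlphaSt} tells us, conversely, that such equality is already enough to recover a derivation in the standard relation $\alphaconvg_s$. Chaining these two facts gives soundness immediately.

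Concretely, I would proceed in two steps. First, given a derivation of $M \alphaconv N$, I instantiate \cref{lemma:equalization} (\Verb|compatSubAlpha|) at $\sigma = \iota$ to obtain $\sub{M}{\iota} \equiv \sub{N}{\iota}$. Second, I feed this equality into \cref{lemma:iotaAlphaSt} (\Verb|iotaAlphaSt|), which directly yields $M \alphaconvg_s N$, as required. No case analysis on the structure of $M$ or on the derivation is needed: the proof is literally the composition of \Verb|iotaAlphaSt| with \Verb|compatSubAlpha| specialized to $\sigma = \iota$.

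There is no genuine obstacle here, precisely because the technical effort has been front-loaded into the two lemmas invoked. It is worth noting, however, where the difficulty would reappear under a more naive approach, namely a direct induction on the derivation of $M \alphaconv N$. In the clauses for $\lambda$-abstractions and $\Pi$-types the only datum we are handed about the bodies is a \emph{syntactic} equality, $M[x := \var y] \equiv M'[x' := \var y]$, rather than a subderivation of $\alphaconvg$; there is therefore no induction hypothesis available for the bodies. To close the corresponding $\alphaconvg_s$ clause one would have to promote this equality to $M[x := \var y] \alphaconvg_s M'[x' := \var y]$, which forces one first to establish reflexivity of $\alphaconvg_s$ and to check that a single common fresh name $y$ meets the side conditions of the $\alphaconvg_s$ rule. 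The $\iota$-normalization route sidesteps all of this, since \cref{lemma:iotaAlphaSt} absorbs exactly that reasoning once and for all.
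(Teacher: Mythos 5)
Your proposal is correct and is exactly the paper's own proof: instantiate \cref{lemma:equalization} at $\sigma = \iota$ to obtain $M \bullet \iota \equiv N \bullet \iota$, then apply \cref{lemma:iotaAlphaSt} to conclude $M\ \alphaconvg_s\ N$. Your closing remarks about why a direct induction would be awkward are accurate but not needed for the result.
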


\begin{proof}
Let $M \sim_\alpha N$. By \refsublemma{lemma:equalization} we have $M \subs \iota \equiv N \subs \iota$, so by \cref{lemma:iotaAlphaSt} we have $M\, \alphaconvg_s\, N$.
\end{proof}

As to the other direction, first we need the following result:

\begin{lemma}
\label{lemma:alphaEq}
\Verb|alphaEq : ∀ {M M' x x' y} → M [ x := v y ] ∼α M' [ x' := v y ]|\\
\Verb| → M [ x := v y ] ≡ M' [ x' := v y ]|
\end{lemma}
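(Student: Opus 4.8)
The plan is to exploit the fact that both sides of the hypothesis are already $\iota$-normal forms. Each of $M[x := \var{y}]$ and $M'[x' := \var{y}]$ is the image of a renaming substitution, and I claim that such a term is left unchanged by a further application of $\iota$; combined with the equalizing power of substitution recorded in \refsublemma{lemma:equalization}, this will promote the given $\alpha$-equivalence into a syntactic equality. No induction on the term structure should be needed.

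First I would apply \refsublemma{lemma:equalization} (compatSubAlpha) to the hypothesis, instantiating the ambient substitution to $\iota$, in order to obtain
\[ M[x := \var{y}] \subs \iota \;\equiv\; M'[x' := \var{y}] \subs \iota. \]
It then suffices to establish the idempotency equation $T \subs \iota \equiv T$ for $T = M[x := \var{y}]$ and for $T = M'[x' := \var{y}]$, since chaining these three equations immediately yields the goal $M[x := \var{y}] \equiv M'[x' := \var{y}]$.

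For the idempotency equation, write $T = M \subs (\iota, x := \var{y})$. By \refsublemma{lemma:subComp} (subComp) we have $T \subs \iota \equiv M \subs (\iota \odot (\iota, x := \var{y}))$. Now for every name $z$ the image $(\iota, x := \var{y})\,z$ is a variable — it is $\var{y}$ when $z \equiv x$ and $\var{z}$ otherwise — and applying $\iota$ to a variable returns it unchanged, so $(\iota \odot (\iota, x := \var{y}))\,z \equiv (\iota, x := \var{y})\,z$ for every $z$. In particular the two substitutions agree on $\fv{M}$, hence \refsublemma{lemma:subEqRes} (subEqRes) gives $M \subs (\iota \odot (\iota, x := \var{y})) \equiv M \subs (\iota, x := \var{y}) = T$. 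The identical computation, with $x'$ in place of $x$, handles $T = M'[x' := \var{y}]$.

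The hard part is essentially bookkeeping: one must verify that the composite $\iota \odot (\iota, x := \var{y})$ agrees pointwise with $\iota, x := \var{y}$, which reduces to the elementary fact $\var{w} \subs \iota \equiv \var{w}$ in each branch of the case split on whether $z \equiv x$. I expect no genuine difficulty beyond this; the remainder is a routine chaining of \refsublemma{lemma:equalization}, \refsublemma{lemma:subComp} and \refsublemma{lemma:subEqRes}.
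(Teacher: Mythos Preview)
Your proposal is correct and follows essentially the same route as the paper: both arguments hinge on the observation that $\iota \odot (\iota, x := \var{y})$ agrees pointwise with $\iota, x := \var{y}$ (because its images are variables), and both combine \refsublemma{lemma:subEqRes}, \refsublemma{lemma:subComp} and \refsublemma{lemma:equalization} to turn the $\alpha$-equivalence into a syntactic equality. The only difference is cosmetic ordering---the paper presents the five steps as a single chain with \refsublemma{lemma:equalization} in the middle, whereas you invoke \refsublemma{lemma:equalization} first and then close up each side separately.
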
 

\theoremstyle{remark}
\newtheorem*{proof*}{Proof}

\begin{proof}
Note that by definition of equality on restrictions we have that $(\iota,x:=y)\cong \iota \odot (\iota,x:=y) \downharpoonright xs$ for any $x$, $y$ and $xs$.
Then we can reason in the following way: 
\begin{align*}
M[x:=y]& \equiv M \subs \iota \odot (\iota,x:=y) & & \text{by \refsublemma{lemma:subEqRes}}\\
&\equiv M[x:=y] \subs \iota & & \text{by \refsublemma{lemma:subComp}}\\
&\equiv M'[x':=y] \subs \iota & & \text{by \refsublemma{lemma:equalization}}\\
&\equiv M' \subs \iota \odot (\iota,x':=y)& & \text{by \refsublemma{lemma:subComp}}\\
&\equiv M'[x':=y]& & \text{by \refsublemma{lemma:subEqRes}}\qedhere
\end{align*}
\end{proof}

\begin{theorem}
\label{theo:adequacyAlpha}
\Verb|completeAlpha : ∀ {M N} → M ∼αₛ N → M ∼α N|
\end{theorem}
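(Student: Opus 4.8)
The plan is to proceed by rule induction on the derivation of $M \,\alphaconvg_s\, N$. Because $\alphaconvg_s$ is inductively defined, its body premises $M[x:=y] \,\alphaconvg_s\, M'[x':=y]$ are genuine sub-derivations even though $M[x:=y]$ is not a structural subterm of $\lambda[x:A]M$; hence the induction hypothesis is available on them without resorting to induction on the length of $M$. Since $\alphaconvg_s$ and $\alphaconvg$ share the same constructors for constants, variables and applications, those three cases are immediate: we apply the corresponding rule of $\alphaconvg$ directly in the two base cases, and in the application case we feed the two induction hypotheses into the $\sim\cdot$ rule.

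The only cases requiring real work are those for $\lambda$-abstractions and $\Pi$-types, which are symmetric, so I describe only the former. Here the derivation of $\lambda[x:A]M \,\alphaconvg_s\, \lambda[x':A']M'$ supplies a common fresh name $y$ with $y \notin \fv M - x$ and $y \notin \fv{M'} - x'$, together with sub-derivations of $A \,\alphaconvg_s\, A'$ and $M[x:=y] \,\alphaconvg_s\, M'[x':=y]$. Applying the induction hypothesis to each yields $A \alphaconv A'$ and $M[x:=y] \alphaconv M'[x':=y]$.

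At this point there is a mismatch to close: the $\sim\lambda$ rule of $\alphaconvg$ does not take an $\alpha$-equivalence of the bodies but a \emph{syntactic} identity $M[x:=y] \equiv M'[x':=y]$. This is exactly the gap bridged by \refsublemma{lemma:alphaEq}, which turns $M[x:=y] \alphaconv M'[x':=y]$ into $M[x:=y] \equiv M'[x':=y]$. With this equality, together with $A \alphaconv A'$ and the two freshness conditions carried over unchanged from the $\alphaconvg_s$-derivation (and reusing its very witness $y$), the $\sim\lambda$ rule of $\alphaconvg$ applies and delivers $\lambda[x:A]M \alphaconv \lambda[x':A']M'$. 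The $\Pi$ case is verbatim with $B$, $B'$ in place of $M$, $M'$.

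I expect the main obstacle to be precisely this last step: the two presentations differ only in whether the body premise is stated up to $\alpha$-conversion or up to syntactic identity, and closing that gap rests entirely on \refsublemma{lemma:alphaEq} (whose own proof, in turn, exploits that $\iota$-normalization equalizes $\alpha$-convertible terms, \refsublemma{lemma:equalization}). Notably, no induction on term length and no re-choosing of the witness $y$ is required, since \refsublemma{lemma:alphaEq} lets us retain the fresh name already provided by the $\alphaconvg_s$-derivation.
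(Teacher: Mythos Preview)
Your proposal is correct and follows exactly the paper's own proof: structural induction on the $\alphaconvg_s$-derivation, with the $\lambda$ and $\Pi$ cases handled by applying the induction hypothesis and then invoking \cref{lemma:alphaEq} to upgrade the resulting $\alpha$-equivalence of bodies to the syntactic identity required by the $\sim\lambda$ (resp.\ $\sim\Pi$) rule. Your identification of \cref{lemma:alphaEq} as the sole non-trivial ingredient, and your observation that the same witness $y$ and freshness conditions can be reused verbatim, are both on point.
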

\begin{proof}
By structural induction on the derivation of $M\ \alphaconvg_s\ N$. The only interesting cases are that of $\lambda$-abstractions and $\Pi$-types. There, one has to use \cref{lemma:alphaEq} to obtain the desired premises.
\end{proof}

\subsection{Beta-conversion}

Given any binary relation on the $\lambda$-terms, 
$\_\mathcal{S}\_$, we define its contextual closure by:
\begin{Verbatim}
data _→C_ : Λ → Λ → Set where
  →cxt : ∀ {M N} → M 𝒮 N → M →C N    
  →λR  : ∀ {x M M' A} → M →C M' → λ[ x ∶ A ] M →C λ[ x ∶ A ] M'
  →ΠR  : ∀ {x B B' A} → B →C B' → Π[ x ∶ A ] B →C Π[ x ∶ A ] B'
  →λL  : ∀ {x M A A'} → A →C A' → λ[ x ∶ A ] M →C λ[ x ∶ A' ] M    
  →ΠL  : ∀ {x B A A'} → A →C A' → Π[ x ∶ A ] B →C Π[ x ∶ A' ] B
  →·L  : ∀ {M N P} → M →C N → M · P →C N · P
  →·R  : ∀ {M N P} → M →C N → P · M →C P · N 
\end{Verbatim}
$\beta$-contraction (\Verb|_▹β_|) is defined as the inductive relation with the single rule: 
$\lambda[x:A]M \cdot N\ {▹β}\ {M[x:=N]}$. 
Then one-step $\beta$-reduction (\Verb|_→β_|) is defined as its contextual closure: \Verb|_→β_ = _→C_ _▹β_|. Many-step $\beta$-reduction (\Verb|_→β$*$_|) and $\beta$-conversion (\Verb|_≃β_|) are respectively defined as the symmetric-and-transitive and equivalence closure of 
$\beta$-reduction augmented with $\alpha$-conversion: \Verb|_→β$*$_ = Star (_∼α_ ∪ _→β_)| and \Verb|_≃β_ = EqClosure (_∼α_ ∪ _→β_)|.

Our definition of one-step $\beta$-reduction is closed under substitution only up to $\alpha$-conversion as in Hindley and Seldin's treatment, thought there this result is left implicit. The proof is adapted from \cite{urciuoli2023}:
\begin{lemma}
\Verb|compatRedSub : ∀ {M N σ} → M →β N → ∃ λ P → M ∙ σ →β P × P ∼α N ∙ σ|
\end{lemma}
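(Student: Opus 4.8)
The plan is to prove the statement by induction on the derivation of $M \betaredg N$, which is the contextual closure of $\betacontrg$. The existential witness $P$ will be constructed by following the structure of the contextual-closure derivation, and at each step I will have to exhibit both a $\betaredg$-step from $M \subs \sigma$ to $P$ and an $\alpha$-equivalence $P \alphaconv N \subs \sigma$. The base case $M \betacontrg N$ is the one genuine reduction; the remaining cases are the seven congruence rules ($\to\!\lambda$R, $\to\!\Pi$R, $\to\!\lambda$L, $\to\!\Pi$L, $\to\!\cdot$L, $\to\!\cdot$R), where the inductive hypothesis supplies a witness for the immediate subterm and I propagate it through the corresponding congruence rule of $\betaredg$, closing the $\alpha$-gap with the fact that $\alphaconv$ is a congruence (an equivalence by \refsublemma{lemma:alphaCongruence}, compatible with the constructors).

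For the base case I have $M = \lambda[x{:}A]P_0 \cdot N_0$ contracting to $P_0[x := N_0]$. The image $M \subs \sigma$ computes to $(\lambda[y{:}A\subs\sigma](P_0 \subs \sigma, x := \var y)) \cdot (N_0 \subs \sigma)$, where $y = X(\sigma, \fv{P_0} - x)$. This redex $\beta$-contracts in one step to $(P_0 \subs \sigma, x := \var y)[y := N_0 \subs \sigma]$, and I take this as my candidate $P$. It remains to show $P \alphaconv (P_0[x := N_0]) \subs \sigma$. By \refsublemma{lemma:subDistribUpd} the right-hand side equals $P_0 \subs \sigma, x := (N_0 \subs \sigma)$, so the goal becomes
\[
(P_0 \subs \sigma, x := \var y)[y := N_0 \subs \sigma] \alphaconv P_0 \subs \sigma, x := (N_0 \subs \sigma),
\]
which is exactly \refsublemma{lemma:factor} with $N := N_0 \subs \sigma$, provided its freshness side-condition $y \freshrg (\sigma, \fv{P_0} - x)$ holds. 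That condition is precisely what \cref{lemma:chi} guarantees for $y = X(\sigma, \fv{P_0} - x)$. So the base case reduces to assembling these three earlier lemmas.

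For the congruence cases, consider $\to\!\lambda$R as representative: I have $\lambda[x{:}A]M_0 \betaredg \lambda[x{:}A]M_0'$ coming from $M_0 \betaredg M_0'$. The subtlety is that the substitution renames the bound variable, so $\lambda[x{:}A]M_0 \subs \sigma$ is $\lambda[y{:}A\subs\sigma](M_0 \subs \sigma, x := \var y)$ and I must apply the inductive hypothesis not to the plain subterm but under an updated substitution $\sigma, x := \var y$. The IH applied to $M_0 \betaredg M_0'$ with substitution $\sigma, x := \var y$ yields a witness $P_1$ with $M_0 \subs \sigma, x := \var y \betaredg P_1$ and $P_1 \alphaconv M_0' \subs \sigma, x := \var y$; I then take $P := \lambda[y{:}A\subs\sigma]P_1$, use $\to\!\lambda$R of $\betaredg$ for the reduction, and use the congruence of $\alphaconv$ (via the $\sim\!\lambda$ constructor) together with \refsublemma{lemma:equalization} to conclude $P \alphaconv \lambda[x{:}A]M_0' \subs \sigma$. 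The $\to\!\lambda$L and $\to\!\Pi$ cases are handled analogously, while the two application cases $\to\!\cdot$L and $\to\!\cdot$R are simpler since no bound variable is introduced.

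\emph{The hard part} will be the bookkeeping in the binder congruence cases, specifically matching the freshly chosen name $y = X(\sigma, \fv{M_0} - x)$ used when normalizing $\lambda[x{:}A]M_0 \subs \sigma$ against the possibly different choice $X(\sigma, \fv{M_0'} - x)$ that appears inside $\lambda[x{:}A]M_0' \subs \sigma$ on the right-hand side of the $\alpha$-equivalence. Since a $\betaredg$-step can change the free-variable set, these two chosen names need not coincide, so I cannot conclude syntactic equality and genuinely need $\alphaconv$ rather than $\equiv$ in the conclusion — this is the structural reason the lemma is stated only up to $\alpha$-conversion. I expect to discharge this by appealing to \refsublemma{lemma:equalization} (which equalizes $\alpha$-convertible terms under any substitution) and the congruence properties of $\alphaconv$, so that the mismatch in chosen names is absorbed into the $\alpha$-equivalence rather than obstructing it.
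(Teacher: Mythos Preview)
Your proposal is correct and follows the natural approach; the paper itself does not spell out this proof, merely noting that it is adapted from the earlier work \cite{urciuoli2023} on a simpler calculus. Your treatment of the base case via \refsublemma{lemma:factor}, \refsublemma{lemma:subDistribUpd}, and \cref{lemma:chi}, and your plan for the congruence cases (apply the IH under the updated substitution and rebuild with the corresponding congruence rule), is exactly what is expected.

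One remark on the ``hard part'' you flag in the $\to\!\lambda\mathrm{R}$ case: equalization and congruence alone do not quite close the gap between the two chosen names $y = X(\sigma,\fv{M_0}-x)$ and $y' = X(\sigma,\fv{M_0'}-x)$. After equalization gives $P_1[y:=\var z] \equiv (M_0'\subs\sigma,x:=\var y)[y:=\var z]$, the $\sim\!\lambda$ rule still demands the \emph{syntactic} equality $(M_0'\subs\sigma,x:=\var y)[y:=\var z] \equiv (M_0'\subs\sigma,x:=\var y')[y':=\var z]$. This goes through by \refsublemma{lemma:subComp}, \refsublemma{lemma:updFresh} and \refsublemma{lemma:subEqRes} once you use that one-step $\beta$-reduction does not enlarge the free-variable set: from $\fv{M_0'}\subseteq\fv{M_0}$ you get $y\freshrg(\sigma,\fv{M_0'}-x)$, and then both sides compute to $M_0'\subs((\iota\odot\sigma),x:=\var z)$. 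So the missing ingredient is precisely the monotonicity of free variables under $\betaredg$, together with the composition lemmas rather than just equalization.
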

Many-step $\beta$-reduction and $\beta$-conversion, on the other hand, are ``fully'' closed under substitution and their proofs follow by structural induction on the derivations of the respective hypotheses, and which are also extended from the aforementioned cite:
\begin{lemma}
\begin{lemenum}
\item \Verb|compatRedsSub : ∀ {M N σ} → M →β$*$ N → M ∙ σ →β$*$ N ∙ σ|
\item 
\label{lemma:compatbetasub}
\Verb|compatConvSub : ∀ {M N σ} → M ≃β N → M ∙ σ ≃β N ∙ σ|
\end{lemenum}
\end{lemma}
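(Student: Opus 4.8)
The plan is to prove both parts by structural induction on the given derivations, exploiting two facts established earlier: that substitution \emph{equalizes} $\alpha$-convertible terms (\refsublemma{lemma:equalization}), which turns every $\alpha$-step into a syntactic equality of images, and that one-step $\beta$-reduction is preserved by substitution up to $\alpha$-conversion (the preceding \Verb|compatRedSub|). The design decision that makes everything go through is that both $\betaredg^{*}$ and $\betaconvg$ are closures of the \emph{union} $(\alphaconvg \mathbin{\cup} \betaredg)$; consequently the residual $\alpha$-discrepancy produced by \Verb|compatRedSub| is not an obstruction but can simply be inserted into the chain as a legitimate $\alpha$-step.

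For part (i), a derivation of $M \betaredg^{*} N$ is a finite chain of steps, each of which is either $M_i \alphaconvg M_{i+1}$ or $M_i \betaredg M_{i+1}$, and I induct on the length of this chain. The empty chain gives $M \subs \sigma \betaredg^{*} M \subs \sigma$ by reflexivity. For the inductive step I examine the first step. If it is an $\alpha$-step $M \alphaconvg M_1$, then \refsublemma{lemma:equalization} yields $M \subs \sigma \equiv M_1 \subs \sigma$, so after rewriting along this equality the goal collapses to the induction hypothesis $M_1 \subs \sigma \betaredg^{*} N \subs \sigma$. If it is a $\beta$-step $M \betaredg M_1$, then \Verb|compatRedSub| provides a $P$ with $M \subs \sigma \betaredg P$ and $P \alphaconvg M_1 \subs \sigma$; I then assemble the chain consisting of the single $\beta$-step $M \subs \sigma \betaredg P$, the single $\alpha$-step $P \alphaconvg M_1 \subs \sigma$, and the induction hypothesis $M_1 \subs \sigma \betaredg^{*} N \subs \sigma$, concluding by transitivity of the \texttt{Star} closure.

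For part (ii), a derivation of $M \betaconvg N$ is again a finite chain, but now each step may be taken \emph{forwards} or \emph{backwards} along $(\alphaconvg \mathbin{\cup} \betaredg)$. The base case and the two forward cases are handled exactly as in part (i), the only difference being that I append using transitivity of the equivalence closure; a backward $\alpha$-step $M_1 \alphaconvg M$ is still just the syntactic equality $M_1 \subs \sigma \equiv M \subs \sigma$ via \refsublemma{lemma:equalization}. The backward $\beta$-step $M_1 \betaredg M$ is the only case needing slightly more care: applying \Verb|compatRedSub| to it yields a $P$ with $M_1 \subs \sigma \betaredg P$ and $P \alphaconvg M \subs \sigma$. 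Reading both facts inside the equivalence closure gives $M_1 \subs \sigma \betaconvg P$ and $P \betaconvg M \subs \sigma$, whence $M \subs \sigma \betaconvg M_1 \subs \sigma$ by symmetry and transitivity; chaining with the induction hypothesis $M_1 \subs \sigma \betaconvg N \subs \sigma$ closes the case.

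Routine bookkeeping aside, the one genuinely load-bearing observation — and the step I expect to be the main obstacle to phrase cleanly — is precisely the treatment of the $\beta$-cases: \Verb|compatRedSub| only returns a term $P$ that is $\alpha$-convertible to the expected image, \emph{not} equal to it, so the argument cannot be carried out inside the pure $\betaredg$ relation. It is essential that the target closures already absorb $\alpha$-conversion, which is exactly why these two statements are ``fully'' closed under substitution whereas the single-step relation of the preceding lemma is not.
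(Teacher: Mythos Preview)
Your proposal is correct and matches the paper's approach: the paper simply states that both parts follow by structural induction on the derivations of the respective hypotheses, using the preceding one-step lemma, and you have spelled out exactly those details (handling $\alpha$-steps via \refsublemma{lemma:equalization} and $\beta$-steps via \Verb|compatRedSub|, with the residual $\alpha$ absorbed by the closure). Nothing is missing or divergent.
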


\section{The Pure Type Systems}
\label{sec:calculus}

Let $\mathcal{A}$, the \emph{axioms}, be any binary relation on $\mathcal{C}$, and let $\mathcal{R}$, the \emph{rules}, be any 3-ary relation on $\mathcal{C}$ as well. Whenever some $s_1$ is related to some other $s_2$ under $\mathcal{A}$ we shall write $\mathcal{A}\,s_1\,s_2$ as is the case for infix relations in most implementations of type theory, e.g., Agda. We will do analogously for $\mathcal{R}$.

The PTS is inductively defined in \Cref{fig:system}, where $\Gamma , x : A$ is definitionally equal to $(x,A) \dblcolon \Gamma$.
We roughly follow the presentation in Section~4.4.10 of \cite{pollackPhD}. We have two kinds of judgments mutually defined and with the following meaning:
\begin{enumerate*}[label=(\roman*)]
\item \ok{\Gamma} means that $\Gamma$ is a valid or well-formed context, and;
\item $\Gamma \vdash M : A$ that $M$ has type $A$ under the context $\Gamma$.
\end{enumerate*}

\newcommand{\app}{\ensuremath{\cdot}}
\def\defaultHypSeparation{\hskip .1in}

\newcommand{\abstree}[1][\typerule{abs}]{
    \begin{prooftree}
        \AxiomC{$\Gamma \vdash A : s_1 \quad \forall z \rightarrow z\not\in\dom{\Gamma} \rightarrow \Gamma, z : A \vdash B[y:=z] : s_2$}
        \noLine
        \UnaryInfC{$\forall z \rightarrow z\not\in\dom{\Gamma} \rightarrow \Gamma, z : A \vdash M[x:=z] : B[y:=z]$}   
        \LeftLabel{\small#1}
        \RightLabel{($\mathcal{R}\,s_1\,s_2\,s_3$)}
        \UnaryInfC{$\Gamma \vdash \lambda[ x : A ] M : \Pi[ y : A ] B$}
    \end{prooftree}
}

\newcommand{\apptree}[1][\typerule{app}]{
    \begin{prooftree}
        \AxiomC{$\Gamma \vdash M : \Pi[ x : A ] B$}
        \AxiomC{$\Gamma \vdash N : A$}
        \AxiomC{$\Gamma \vdash B[x:=N]:s$}    
        \LeftLabel{\small#1}
        \TrinaryInfC{$\Gamma \vdash M \app N : B[x:=N]$}
    \end{prooftree}
}

\newcommand{\vartree}[1][\typerule{var}]{
    \begin{prooftree}
        \AxiomC{\ok{\Gamma}}
        \LeftLabel{\small#1}
        \RightLabel{($(x,A) \in \Gamma$)}
        \UnaryInfC{$\Gamma \vdash x : A$}
        \end{prooftree}
}

\begin{figure}[t]
\small
\centering
\AxiomC{}
\LeftLabel{\small\typerule{nil}} 
\UnaryInfC{$\ok{[]}$}
\bottomAlignProof
\DisplayProof
\quad
\AxiomC{\ok{\Gamma}}
\AxiomC{$\Gamma \vdash A : s$}
\LeftLabel{\small\typerule{cons}} 
\RightLabel{($x \not\in \dom{\Gamma}$)}
\BinaryInfC{\ok{\Gamma,x:A}}
\bottomAlignProof
\DisplayProof

\bigskip

\AxiomC{$\ok{\Gamma}$}
\LeftLabel{\small\typerule{sort}}
\RightLabel{($\mathcal{A}\,s_1\,s_2$)}
\UnaryInfC{$\Gamma \vdash s_1 : s_2$}
\bottomAlignProof
\DisplayProof

\bigskip

\AxiomC{$\Gamma \vdash A : s_1$}
\AxiomC{$\forall y \rightarrow y \not\in \dom{\Gamma} \rightarrow \Gamma,y:A \vdash B[x:=y] : s_2$}
\LeftLabel{\small\typerule{prod}}
\RightLabel{($\mathcal{R}\,s_1\,s_2\,s_3$)}
\BinaryInfC{$\Gamma \vdash \Pi[x:A]B : s_3$}
\bottomAlignProof
\DisplayProof

\vartree

\abstree

\apptree

\AxiomC{$\Gamma \vdash M : A$}
\AxiomC{$A \betaconv B$}
\AxiomC{$\Gamma \vdash B : s$}
\LeftLabel{\small\typerule{conv}}
\TrinaryInfC{$\Gamma \vdash M : B$}
\bottomAlignProof
\DisplayProof
\caption{Pure Type Systems}
\label{fig:system}
\end{figure}

To be more precise, the type system is defined by using \emph{generalized} induction \cite[p.~382]{mckinna99}. 
The rules \typerule{abs} and \typerule{prod} have infinitely many branching trees due to the scope of the quantification in the premises concerning the body of the abstractions therein.
Having a derivation for every fresh name will give us a stronger induction hypothesis in the cases at issue for the proof of the thinning lemma, 
which otherwise we would have to prove either by using some equivariance result (a renaming lemma) or some other more involved method.
A more up-to-date version of this technique is known as \textit{cofinite quantification}.
We note that Agda does not give any special meaning to generalized induction and it is treated just as any other inductive clause. We will also refer to these kind of rules as \textit{infinitary}.

The occurrence of the premise $\Gamma \vdash A : s_1$ in the rule \typerule{prod} is standard, e.g., see \cite{barendregt92}. On the other hand, having $\Gamma \vdash B[x:=N] : s$ in the \typerule{app} rule is new to the best of our knowledge, and as we shall see in due course, it is convenient for the proof of the substitution lemma when using Stoughton's definition.

\subsection{Basic Properties}

To begin with, we have that variables mentioned anywhere must be declared:

\begin{lemma}
\label{lemma:declaration}
\begin{lemenum}
\item \Verb|fvCxt : ∀ {Γ x A w} → Γ ok → (x , A) ∈ Γ → w $*$ A → w ∈ dom Γ|
\item \Verb|fvAsg : ∀ {Γ M A w} → Γ ⊢ M ∶ A → x $*$ M · A → w ∈ dom Γ|
\end{lemenum}
\end{lemma}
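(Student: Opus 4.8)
The plan is to prove the two statements \emph{simultaneously}, by mutual induction mirroring the mutual definition of the judgments $\ok{\Gamma}$ and $\Gamma \vdash M : A$: part~(i) goes by induction on the derivation of $\ok{\Gamma}$ and part~(ii) by induction on the derivation of $\Gamma \vdash M : A$. The two inductions are genuinely interleaved — the \typerule{cons} case of~(i) calls~(ii) on a structurally smaller typing premise, and the \typerule{var} case of~(ii) calls~(i) on a smaller context premise — so both recursive calls remain on subderivations and the mutual recursion is well founded. Throughout I read the free-variable hypotheses as list-membership facts and break them apart with \cref{lemma:genList} (principally \texttt{appIn} to split $\free{w}{M \cdot A}$ into $\free{w}{M}$ or $\free{w}{A}$, and \texttt{delIn} to turn $w \in \fv M - x$ into $w \neq x$ together with $\free{w}{M}$); I also use the elementary list fact that $(x,A) \in \Gamma$ implies $x \in \dom{\Gamma}$.

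For part~(i), the \typerule{nil} case is vacuous since nothing belongs to the empty context. In the \typerule{cons} case $\Gamma = \Gamma', y : B$ with subderivations $\ok{\Gamma'}$ and $\Gamma' \vdash B : s$, I split the membership $(x,A) \in \Gamma', y:B$: if the pair is the head then $A = B$ and $\free{w}{A}$ feeds part~(ii) applied to $\Gamma' \vdash B : s$, giving $w \in \dom{\Gamma'}$; otherwise $(x,A) \in \Gamma'$ and I finish by the induction hypothesis of~(i). Either way $\dom{\Gamma} = y \dblcolon \dom{\Gamma'}$, so $w \in \dom{\Gamma}$.

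For part~(ii) I work through the typing rules. The \typerule{sort} case is vacuous, as both the subject and the type are constants and so carry no free names. The \typerule{var} case splits $\free{w}{x \cdot A}$ into $w = x$, settled by $(x,A) \in \Gamma$, and $\free{w}{A}$, settled by part~(i) applied to the premise $\ok{\Gamma}$. The \typerule{conv} and \typerule{app} cases are handled by routing each disjunct of the free-variable hypothesis to the induction hypothesis on the appropriate premise; here the extra premise $\Gamma \vdash B[x:=N] : s$ of \typerule{app} is precisely what lets me discharge $\free{w}{B[x:=N]}$ directly, without having to analyse how the free names of $B$ and $N$ recombine under substitution.

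The crux is the two infinitary rules \typerule{prod} and \typerule{abs}, whose bodies are typed only under a universally quantified fresh name. Take the representative subcase $w \in \fv B - x$ of \typerule{prod}, so $\free{w}{B}$ and $w \neq x$. I will instantiate the premise $\forall z \to z \notin \dom{\Gamma} \to \Gamma, z:A \vdash B[x:=z] : s_2$ at a name $z$ chosen fresh for the list $w \dblcolon \dom{\Gamma}$ via the choice function (\cref{lemma:chiPrime}), so that simultaneously $z \notin \dom{\Gamma}$ and $z \neq w$. Then \cref{lemma:nocapture}, read right-to-left with $w$ itself as the witness (using $w \in \fv B$ and $w \neq x$, whence $(\iota, x := v\,z)\,w = v\,w$), gives $w \in \fv{(\sub{B}{(\iota, x := v\,z)})}$, i.e. $\free{w}{B[x:=z]}$. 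Applying the induction hypothesis of~(ii) to the body derivation then yields $w \in \dom{(\Gamma, z:A)} = z \dblcolon \dom{\Gamma}$, and since $z \neq w$ this forces $w \in \dom{\Gamma}$. The \typerule{abs} rule is identical for its $\fv M - x$ and $\fv B - y$ subcases. I expect this step — instantiating the cofinite premise at a fresh name deliberately chosen to avoid $w$, combined with the \cref{lemma:nocapture} bookkeeping — to be the main obstacle, the remaining cases being mechanical membership manipulations.
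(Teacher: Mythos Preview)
Your proposal is correct and follows essentially the same approach as the paper: simultaneous induction on the (mutually defined) derivations, with the infinitary \typerule{prod} and \typerule{abs} cases handled by instantiating the generalized premise at $z = X'(w \dblcolon \dom{\Gamma})$ via \cref{lemma:chiPrime}, pushing $w$ through the renaming with \cref{lemma:nocapture}, and then discarding the extraneous $z$ from $\dom{(\Gamma,z{:}A)}$ using $z \not\equiv w$. Your write-up is in fact more explicit than the paper's about part~(i) and about the routing in \typerule{app} and \typerule{conv}, but the argument is the same.
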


\begin{proof}
By simultaneous induction on the the structure of the typing derivation and subordinate case analysis on the occurrence of $w$. The only interesting case is that of $\lambda$-abstractions; the others follow either analogously or directly by the IH.
There, we have the hypotheses:
\abstree[\normalsize(a)]
and (b) $w \in \fv{(\lambda[x : A]M \appendg \Pi[x : A]B)}$, and we must show $w \in \dom{\Gamma}$. By \cref{lemma:genList} there are three different cases as to the generation of (b):
\begin{itemize}
	\item Case $w*A$. Immediate by the IH (ii).
	\item Case $w*M$ and $x \not\equiv w$. 
			Let $z=X'(w \dblcolon \dom{\Gamma})$. 
			By \cref{lemma:chiPrime} we have that $z\not\in w\dblcolon\dom{\Gamma}$, thus $z\not\in\dom{\Gamma}$ and $z\not\equiv w$. 
			Next, since $w * w[x:=z]$ then we have $w * M[x:=z]$ by \cref{lemma:nocapture}. 
			Finally, by the IH~(ii) using the bottom-most premise we have $w\in\dom{z\dblcolon\Gamma}$, hence $w\in\dom{\Gamma}$.
			\item Case $w*B$ and $y \not\equiv w$. Analogous to the previous case.\qedhere
\end{itemize}
\end{proof}

The next result follows directly as the contrapositive of the previous lemma:

\begin{corollary}
\label{lemma:freshness}
\begin{lemenum}
\item \Verb|freshCxt : ∀ {Γ y A w} → Γ ok → w ∉ dom Γ → (y , A) ∈ Γ → w # A|
\item \Verb|freshAsg : ∀ {Γ M A w} → w ∉ dom Γ → Γ ⊢ M ∶ A → w # M · A|
\end{lemenum}
\end{corollary}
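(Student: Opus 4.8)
The plan is to derive \Cref{lemma:freshness} directly as the contrapositive of \Cref{lemma:declaration}, without re-running any induction. The key observation is that both statements are logically equivalent once we push the negation through: \Cref{lemma:declaration} says that if a variable $w$ occurs free in an assignment (or in a type within a well-formed context), then $w$ must be declared in the domain of $\Gamma$; \Cref{lemma:freshness} simply asserts the contrapositive, namely that if $w$ is \emph{not} declared in $\dom{\Gamma}$, then $w$ cannot be free, i.e.\ it is fresh.

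First I would prove \texttt{freshAsg}. Given $w \not\in \dom{\Gamma}$ and a derivation $\Gamma \vdash M : A$, I must show \fresh{w}{M \cdot A}, which by definition means $w \not\in \fv{(M \cdot A)}$. I would assume toward a contradiction that \free{w}{M \cdot A}, i.e.\ $w \in \fv{(M \cdot A)}$. Applying \Cref{lemma:declaration} (specifically \texttt{fvAsg}) to the derivation and this occurrence yields $w \in \dom{\Gamma}$, which directly contradicts the hypothesis $w \not\in \dom{\Gamma}$. Since freshness \fresh{w}{M \cdot A} unfolds to the negation $w \not\in \fv{(M \cdot A)}$, this is exactly the function that takes a proof of $w \in \fv{(M \cdot A)}$ and produces the falsity obtained by feeding the resulting $\dom{\Gamma}$-membership into the hypothesis.

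Second I would prove \texttt{freshCxt} in the same manner. Given \ok{\Gamma}, $w \not\in \dom{\Gamma}$, and $(y,A) \in \Gamma$, I must show \fresh{w}{A}. Unfolding freshness, I assume \free{w}{A}, i.e.\ $w \in \fv A$, and apply \texttt{fvCxt} from \Cref{lemma:declaration} using the well-formedness hypothesis \ok{\Gamma} together with the membership $(y,A) \in \Gamma$ to conclude $w \in \dom{\Gamma}$, again contradicting $w \not\in \dom{\Gamma}$.

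There is no real obstacle here: the entire content lives in \Cref{lemma:declaration}, and this corollary is a purely propositional manipulation. The only minor care point is bookkeeping around the definitional unfolding of \fresh{w}{\_} as $w \not\in \fv{\_}$ and of \free{w}{\_} as $w \in \fv{\_}$, so that the contrapositive matches up syntactically in Agda; in particular one must make sure the occurrence hypothesis fed to \Cref{lemma:declaration} has exactly the shape the lemma expects (an element of $\fv{(M \cdot A)}$ for \texttt{fvAsg}, and of $\fv A$ for \texttt{fvCxt}). Once the types align, each clause is just the composition of \Cref{lemma:declaration} with the negation hypothesis.
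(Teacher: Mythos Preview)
Your proposal is correct and matches the paper's approach exactly: the paper states that the corollary ``follows directly as the contrapositive of the previous lemma'' (\Cref{lemma:declaration}), which is precisely what you do for both clauses. There is nothing to add.
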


Lastly, the following results about the validity of contexts and generation of $\Pi$-types are routine:

\begin{lemma}
    \begin{lemenum}
        \item
        \label{lemma:contextValidityAsg}
        \Verb|validCxt : ∀ {Γ M A} → Γ ⊢ M ∶ A → Γ ok|
        \item 
        \label{lemma:genProd}
        \Verb|genProd :  ∀ {Γ x A B C} → Γ ⊢ Π[ x ∶ A ] B ∶ C → ∃₃ λ s₁ s₂ s₃ → ℛ s₁ s₂ s₃|\\
        \Verb| × Γ ⊢ A ∶ c s₁ × (∀ y → y ∉ dom Γ → Γ ‚ y ∶ A ⊢ B [ x := v y ] ∶ c s₂) × C ≃β c s₃|
    \end{lemenum}
\end{lemma}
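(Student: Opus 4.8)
The plan is to prove both statements by induction on the given typing derivation, reading each as an inversion property of the system in \cref{fig:system}. For \cref{lemma:contextValidityAsg} I would analyse the last rule used to derive $\Gamma \vdash M : A$. In the cases \typerule{var} and \typerule{sort} the judgment \ok{\Gamma} is already one of the premises, so there is nothing to do. In each of the remaining cases---\typerule{prod}, \typerule{abs}, \typerule{app} and \typerule{conv}---the rule carries at least one typing premise over the same context $\Gamma$, so I would recover \ok{\Gamma} by applying the induction hypothesis to that sub-derivation; for the infinitary rules \typerule{prod} and \typerule{abs} it is enough to use the premise $\Gamma \vdash A : c\,s_1$, which is a genuine sub-derivation not guarded by the quantifier. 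None of the cases offers any resistance.

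For \cref{lemma:genProd} the crucial remark is that only two rules can conclude a judgment whose subject is a $\Pi$-type, so the induction on $\Gamma \vdash \Pi[x:A]B : C$ reduces to just these. If the last rule is \typerule{prod}, its premises deliver exactly the required witnesses: sorts $s_1, s_2, s_3$ with $\mathcal{R}\,s_1\,s_2\,s_3$, the derivation $\Gamma \vdash A : c\,s_1$, and the quantified body judgment $\Gamma, y:A \vdash B[x:=y] : c\,s_2$ for $y \notin \dom{\Gamma}$; since the assigned type $C$ is literally $c\,s_3$, the final conjunct $C \betaconvg c\,s_3$ holds by reflexivity. If the last rule is \typerule{conv}, then the sub-derivation is $\Gamma \vdash \Pi[x:A]B : C'$ with side condition $C' \betaconvg C$, and I would apply the induction hypothesis to it to obtain the same witnesses together with $C' \betaconvg c\,s_3$.

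The only point requiring a little care is this \typerule{conv} case, where the $\beta$-conversion recorded in the type has to be propagated across the induction. This is routine because $\betaconvg$ is an equivalence by construction (it is defined as an equivalence closure): from $C' \betaconvg C$ and $C' \betaconvg c\,s_3$ I would conclude $C \betaconvg c\,s_3$ by symmetry and transitivity. Everything else in both proofs amounts to reading off the premises of the relevant rule, which is why these results can be classified as routine.
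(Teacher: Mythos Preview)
Your proposal is correct and matches what the paper intends: it labels both results as ``routine'' without giving any proof, and the standard argument is precisely the induction on the typing derivation you describe, with the case split on \typerule{prod} and \typerule{conv} for the generation lemma. Your handling of the \typerule{conv} case via symmetry and transitivity of $\betaconvg$ is exactly the expected step.
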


\section{Some Fundamental Metatheory}
\label{sec:metatheory}

The type system presented in the previous section enjoys some nice metatheoretic properties such as thinning (\cref{lemma:weakening}), syntactic validity (\cref{lemma:typeValidity}), closure under $\alpha$-conversion (\cref{lemma:closureAlpha}) and substitution (\cref{lemma:substitution}). 

The thinning lemma follows quite easily thanks to the use of generalized induction. 

As to the other results, the situation is a bit more complicated.
Consider a presentation of the type system in which the application rule does not mention the third premise, and let use proceed to prove informally, to begin with, syntactic validity and by using structural induction.
In the complex case of applications we are given:
\begin{equation}
\label{eq:app}
\AxiomC{$\Gamma \vdash M : \Pi[x:A]B$}
\AxiomC{$\Gamma \vdash N : A$}
\BinaryInfC{$\Gamma \vdash M \app N : B[x:=N]$}
\DisplayProof
\end{equation}
and we have to show that $B[x:=N]$ is a valid type, i.e., either $\Gamma \vdash B[x:=N]:s$ or $B[x:=N]\equiv s$ for some sort $s$. 
First, by the induction hypothesis we can derive that $\Gamma\vdash\Pi[x:A]B:s$.
Then, by the generation lemma we get $\Gamma,y:A \vdash B[x:=y]:s$ for every fresh name $y$. 
Now, to derive the goal from this last result we  have to apply the substitution $\iota,y:=N$ and obtain a valid type, all of which requires having some substitution lemma at hand. So far is routine.

Next let us consider the proof of closure under substitution of the type system.
Again, let us take a look at the case of applications.
Given (\ref{eq:app}), we have to show $\Delta \vdash (M \app N) \subs \sigma:B[x:=N]\subs\sigma$ for some context $\Delta$. 
By the induction hypothesis on each premise 
followed by
the application rule one can derive ${\Delta \vdash (M \app N)\subs\sigma:(B\subs\sigma,x:=x')[x':=N\subs\sigma]}$. 
Now, the types ${(B\subs\sigma,x:=x')[x':=N\subs\sigma]}$ and $B[x:=N]\subs\sigma$ are not identical but $\alpha$-convertible, 
as already commented somewhere,
so it is necessary to have closure under $\alpha$-conversion at hand.
In \cite{mckinna93,mckinna99,pollackPhD} the authors were able to prove directly the lemma because there substitution preserves the identity of the $\lambda$-terms. 
However, as we shall address later on, they had to resort to more involved methods than us for an important lemma in the theory of conversion.

An alternative to using closure under $\alpha$-conversion above is to 
use the conversion rule and rewrite the types at question, however, this is not easy.
In that case, one should first 
build a derivation of the (syntactic) validity of 
${(B\subs\sigma,x:=x')[x':=N\subs\sigma]}$.
The direct procedure 
to this end 
is to use the validity lemma on the left-hand side premise in (\ref{eq:app}), then the generation lemma to derive that $B[x:=y]$ is well-formed for some fresh name $y$ and finally the induction hypothesis twice, once with $\sigma,y:=z$ and other time with $\iota,z:=N\subs\sigma$ for some sufficiently fresh name $z$ (which might perfectly be the same name as $y$), 
all of which yields a result identical to the one desired because of \refsublemma{lemma:exp}.\footnote{Note that we do not have a result for composing well-typed substitutions, a result which seems to require the very same lemma we are trying to prove.}
Now, the main problem with this argument is that it is not always the case that the derivation of the validity of $B[x:=y]$ constructed during the proof of the validity lemma is of a smaller size in any way than that of (\ref{eq:app}), hence the well-foundedness of the whole argument is questioned.
A more elaborated method is required.

So, back to our point, the proof of closure under $\alpha$-conversion also relies on the previous lemmas. In the case of applications we have $M\app N \alphaconv M'\app N'$, in addition to (\ref{eq:app}), and we must find a derivation of $\Delta \vdash M'\app N' : B[x:=N]$ for some context $\Delta$. By the induction hypotheses followed by the application rule we can derive $\Delta \vdash M' \app N' : B[x:=N']$. Now, to rewrite the type $B[x:=N']$ into $B[x:=N]$, first we have to show that the latter is a valid type. This only seems possible if we use syntactic validity and then substitution, and so closing the dependency circle.

Now, instead of attempting to prove all three lemmas simultaneously, we will make a slight simplification to the problem.
By adding the premise $\Gamma \vdash B[x:=N]:s$ to the application rule we have managed to cut two of the dependencies: $\alpha$-conversion on syntactic validity, and the latter on substitution. As a result, we shall see that syntactic validity and $\alpha$-conversion can be proven separately and first, and substitution after.
Furthermore, we shall prove that said premise is derivable in the sense that we can consider a presentation of the type system that does not mention it yet derives the same judgments.

\subsection{Thinning}

To begin with, we have the thinning lemma which follows in a fairly direct way (we refer the reader to \cite[p.~65]{pollackPhD} for an account on the proof):

\begin{lemma}
\label{lemma:weakening}
\Verb|thinning : ∀ {Γ Δ M A} → Γ ⊆ Δ → Δ ok → Γ ⊢ M ∶ A → Δ ⊢ M ∶ A|
\end{lemma}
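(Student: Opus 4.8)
The plan is to prove \cref{lemma:weakening} by structural induction on the derivation of $\Gamma \vdash M : A$, mutually with the analogous thinning statement for the context-formation judgment (i.e.\ if $\Gamma \subseteq \Delta$, $\ok{\Delta}$ and $\ok{\Gamma}$, then the same conclusion propagates through extensions). In fact, since both judgment forms are defined mutually, I would set up the induction to cover both $\Gamma \vdash M : A$ and the auxiliary fact needed when we extend the context. The hypotheses $\Gamma \subseteq \Delta$ and $\ok{\Delta}$ travel along unchanged for the non-binding rules, and the real content is in how they are threaded through the binders.

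First I would dispatch the easy cases. For \typerule{sort} the conclusion $\Delta \vdash s_1 : s_2$ follows immediately from $\ok{\Delta}$ and the side condition $\mathcal{A}\,s_1\,s_2$. For \typerule{var}, from $(x,A)\in\Gamma$ and $\Gamma\subseteq\Delta$ we get $(x,A)\in\Delta$, and combined with $\ok{\Delta}$ the \typerule{var} rule gives $\Delta \vdash x : A$. The \typerule{app} and \typerule{conv} cases follow by applying the induction hypothesis to each premise and then reapplying the same rule, since neither introduces a new binding and the $\beta$-conversion side condition in \typerule{conv} is untouched by the change of context.

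The crux is in the binding rules \typerule{abs}, \typerule{prod} (and the context extension in \typerule{cons}), and this is precisely where generalized induction earns its keep. Consider \typerule{abs}: the premise concerning the body is a function giving, for \emph{every} $z \notin \dom{\Gamma}$, a derivation $\Gamma, z : A \vdash M[x:=z] : B[y:=z]$. To reapply the rule under $\Delta$, I need, for every $z \notin \dom{\Delta}$, a derivation $\Delta, z : A \vdash M[x:=z] : B[y:=z]$. Given such a $z$, I first note $z \notin \dom{\Delta}$ implies $z \notin \dom{\Gamma}$ (since $\Gamma \subseteq \Delta$ forces $\dom{\Gamma} \subseteq \dom{\Delta}$), so the supplied premise yields a derivation over $\Gamma, z : A$. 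I then apply the induction hypothesis with the extended inclusion $\Gamma, z : A \subseteq \Delta, z : A$ and the extended validity $\ok{\Delta, z : A}$; the latter is built from $\ok{\Delta}$ and a typing $\Delta \vdash A : s_1$ obtained from the first premise by a separate application of the induction hypothesis, using the \typerule{cons} rule together with the freshness side condition $z \notin \dom{\Delta}$. The analogous premise for $B$ is handled identically, and \typerule{prod} is the same pattern restricted to the type level.

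The main obstacle I anticipate is purely bookkeeping around the extended inclusion and validity for the binder cases: one must verify that $\Gamma, z : A \subseteq \Delta, z : A$ holds (immediate from $\Gamma \subseteq \Delta$ by case analysis on membership of the head versus the tail) and that $\ok{\Delta, z : A}$ is derivable for the \emph{same} freshness witness used by the infinitary premise. Crucially, because the premise is quantified over \emph{all} fresh $z$, I never need a renaming or equivariance lemma to align bound-variable choices between the $\Gamma$- and $\Delta$-derivations — the strengthened induction hypothesis already hands me a derivation for whichever $z$ the $\Delta$-side demands. This is the one step where an ordinary (finitely branching) formulation of the rules would force a detour through a renaming lemma, and it is exactly what the generalized-induction presentation is designed to avoid.
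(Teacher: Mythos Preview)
Your proposal is correct and matches the approach the paper intends (the paper gives no detailed proof, deferring to Pollack's thesis, whose argument is precisely the structural induction you outline, exploiting the infinitary premises so that no renaming lemma is needed). One small simplification: there is no need to set up a mutual induction with the context-formation judgment, since $\ok{\Delta}$ is already a hypothesis of the lemma; in the \typerule{sort} and \typerule{var} cases you use it directly, and in the binder cases you manufacture $\ok{\Delta,z:A}$ from it via \typerule{cons} exactly as you describe.
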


\newcommand{\lkup}[2]{\ensuremath{#1\langle#2\rangle}}
\newcommand{\const}[1]{\ensuremath{\mathsf{c\,#1}}}

\subsection{Syntactic Validity}

Every type assignable to some $\lambda$-term is itself a valid type. For many presentations, e.g., \cite{barendregt92,harper93,mckinna93,mckinna99,pollackPhD}, this result has to wait until closure under substitution has been established. In our case, however, because we have added the aforementioned premise we can show it in advance. 

The proof is really straightforward and we do not need to use induction, just case analysis. We only need the next lemma for the case of variables and which follows by structural induction on the derivation of \ok{\Gamma} and by using the thinning lemma: 

\begin{lemma}
\label{lemma:validDecl}
\Verb|validDecl : ∀ {Γ x A} → Γ ok → (x , A) ∈ Γ → ∃ λ s → Γ ⊢ A ∶ c s|
\end{lemma}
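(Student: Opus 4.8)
The plan is to proceed by structural induction on the derivation of $\ok{\Gamma}$, with subordinate case analysis on the witness of $(x,A) \in \Gamma$; the thinning lemma (\cref{lemma:weakening}) does the real work.

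In the base case $\ok{\Gamma}$ is derived by \typerule{nil}, so $\Gamma = []$ and the hypothesis $(x,A) \in []$ is uninhabited, discharging the case vacuously. In the inductive case $\Gamma$ has the shape $\Gamma', y : B$ and $\ok{\Gamma}$ was obtained by \typerule{cons} from subderivations of $\ok{\Gamma'}$ and $\Gamma' \vdash B : s'$ (for some sort $s'$), together with $y \not\in \dom{\Gamma'}$. I then inspect how $(x,A) \in \Gamma', y : B$ holds. If the pair is the head, then $x = y$, $A = B$, and the premise $\Gamma' \vdash B : s'$ already types $A$ by a sort, although in the shorter context $\Gamma'$. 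If instead $(x,A)$ lies in the tail, the induction hypothesis applied to $\ok{\Gamma'}$ and that membership yields a sort $s$ with $\Gamma' \vdash A : s$. Either way I obtain a typing of $A$ in $\Gamma'$.

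It then remains to transport this judgment from $\Gamma'$ to the full context $\Gamma = \Gamma', y : B$, which is exactly what thinning provides: since $\Gamma' \subseteq \Gamma', y : B$ and $\ok{\Gamma', y : B}$ holds by assumption, I weaken $\Gamma' \vdash A : s$ to $\Gamma', y : B \vdash A : s$, the desired goal. The argument poses no genuine difficulty; the only point demanding care is that the subderivations returned by \typerule{cons} (and by the induction hypothesis) live in the strictly shorter context $\Gamma'$, so one cannot conclude immediately but must reinstate the freshly discarded binding via \cref{lemma:weakening}.
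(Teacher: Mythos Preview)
Your proof is correct and matches the paper's own argument: it proceeds by structural induction on the derivation of $\ok{\Gamma}$ and invokes the thinning lemma to transport the typing of $A$ from the shorter context $\Gamma'$ to $\Gamma',y:B$.
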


\begin{lemma}
\label{lemma:typeValidity}
\Verb|syntacticValidity : ∀ {Γ M A} → Γ ⊢ M ∶ A → ∃ λ s → A ≡ c s ⊎ Γ ⊢ A ∶ c s|
\end{lemma}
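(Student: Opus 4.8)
The plan is to prove this by a single case analysis on the last rule used in the derivation of $\Gamma \vdash M : A$, inspecting each of the rules of \cref{fig:system}; as anticipated, no induction is needed, because in every case the type $A$ is either syntactically a sort or is already exhibited as well-typed by one of the premises of the concluding rule (or, for variables, by an auxiliary lemma). I would organise the six cases into three easy groups.

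First I would dispatch the two cases in which $A$ is literally a sort constant. For \typerule{sort} the conclusion is $\Gamma \vdash c\,s_1 : c\,s_2$, so $A \equiv c\,s_2$ and the left disjunct holds with $s := s_2$; for \typerule{prod} the conclusion has type $c\,s_3$, so $A \equiv c\,s_3$ and I take the left disjunct with $s := s_3$. Next come the two cases in which a premise directly supplies the required typing. For \typerule{conv} the concluding type $B$ comes with the premise $\Gamma \vdash B : c\,s$, so the right disjunct is immediate. Crucially, \typerule{app} is just as immediate: its conclusion has type $B[x:=N]$, and the third premise --- the one added to the rule precisely for this purpose --- is exactly $\Gamma \vdash B[x:=N] : c\,s$. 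This is where the modified application rule earns its keep, severing the circular dependency on closure under substitution that forces other presentations to postpone syntactic validity.

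That leaves only \typerule{abs} and \typerule{var} to argue. In the \typerule{abs} case the concluding type is $\Pi[y:A]B$, and the key observation is that the first two premises of \typerule{abs}, namely $\Gamma \vdash A : s_1$ together with $\Gamma, z:A \vdash B[y:=z] : s_2$ for every $z \not\in \dom{\Gamma}$, and the side condition $\mathcal{R}\,s_1\,s_2\,s_3$, are \emph{exactly} the premises of \typerule{prod}; re-applying \typerule{prod} therefore yields $\Gamma \vdash \Pi[y:A]B : c\,s_3$, giving the right disjunct. Finally, in the \typerule{var} case we have $(x,A)\in\Gamma$ and $\ok{\Gamma}$, so \cref{lemma:validDecl} supplies a sort $s$ with $\Gamma \vdash A : c\,s$, again the right disjunct. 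Since the application rule already carries the premise that would otherwise be the stumbling block, there is no genuine obstacle here; I expect the \typerule{abs} case --- recognising that its first two premises coincide with those of \typerule{prod} --- to be the only step that demands any thought.
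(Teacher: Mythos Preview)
Your proposal is correct and matches the paper's approach exactly: a single case analysis on the last rule, using \cref{lemma:validDecl} for \typerule{var}, the added third premise for \typerule{app}, and re-applying \typerule{prod} for \typerule{abs}. The paper merely states the proof is ``straightforward'' case analysis (no induction) with the variable case handled by the auxiliary lemma; your write-up spells out each case in the same spirit.
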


\subsection{Closure Under Alpha-conversion}

Next we have closure under $\alpha$-conversion. We are going to split the lemma in half, one for the conversion of subjects and the other for the predicates.

Let $\alpha$-conversion be extended to contexts pointwise by:
\begin{Verbatim}
_≈α_ = Pointwise (λ (x , A) (y , B) → x ≡ y × A ∼α B)
\end{Verbatim}
\noindent Then we have the first part:
\begin{lemma}
\label{lemma:closureAlpha}
\begin{lemenum}
\item \Verb|closAlphaCxt : ∀ {Γ Δ} → Γ ≈α Δ → Γ ok → Δ ok|
\item \Verb|closAlphaAsg : ∀ {Γ Δ M N A} → Γ ≈α Δ → M ∼α N → Γ ⊢ M ∶ A → Δ ⊢ N ∶ A|
\end{lemenum}
\end{lemma}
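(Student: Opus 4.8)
The plan is to prove both statements by a single simultaneous induction on the derivations of \ok{\Gamma} and $\Gamma \vdash M : A$, crucially exploiting that \texttt{closAlphaAsg} is allowed to vary the context (along $\approx_\alpha$) and the subject (along \alphaconvg) at the same time, so that a sub-derivation can be transported to a different context and an $\alpha$-equivalent term in one appeal to the induction hypothesis. Two structural facts keep the bookkeeping light: first, $\Gamma \approx_\alpha \Delta$ forces $\dom{\Delta} = \dom{\Gamma}$, since the first components of pointwise-related pairs coincide, so every side condition of the form $x \not\in \dom{\cdot}$ transfers verbatim; second, the head constructor of $M \alphaconv N$ pins down which rule produced $\Gamma \vdash M : A$, so the case analysis on the two hypotheses stays in lock-step. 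The leaf cases are then immediate: for \typerule{sort} and \typerule{var} the relation \alphaconvg\ forces $N$ to be the same constant or variable as $M$, and \ok{\Delta} comes from the \texttt{closAlphaCxt} hypothesis applied to the \ok{\Gamma} premise; \typerule{nil} and \typerule{cons} are discharged by the same induction, the latter combining both hypotheses on its two premises.

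For the composite cases the recurring phenomenon is that reconstructing the rule in $\Delta$ yields a type that is only \alphaconv-equivalent, hence \betaconv-equivalent, to the required $A$, so the \typerule{conv} rule must mediate. In the \typerule{app} case, $M \cdot N \alphaconv M' \cdot N'$ splits into $M \alphaconv M'$ and $N \alphaconv N'$; I would apply the induction hypothesis to the three premises, transporting in particular the third premise $\Gamma \vdash B[x:=N]:s$ directly to $\Delta \vdash B[x:=N']:s$ by changing both context and subject, using $B[x:=N] \alphaconv B[x:=N']$ (which follows from $N \alphaconv N'$ by \refsublemma{lemma:subAlpha}). Re-applying \typerule{app} gives $\Delta \vdash M' \cdot N' : B[x:=N']$, and \typerule{conv} rewrites the type to $B[x:=N]$, its validity premise being the image of the same third premise under the induction hypothesis ($B[x:=N] \alphaconv B[x:=N]$). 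Here one only needs that \alphaconvg\ is contained in \betaconvg, so that the $\alpha$-facts produced by \refsublemma{lemma:subAlpha} become the $\beta$-conversions demanded by \typerule{conv}; no detour through syntactic validity is required.

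The infinitary rules \typerule{abs} and \typerule{prod} are handled analogously, with two extra ingredients. First, their quantified premises are instantiated at fresh names $z \not\in \dom{\Delta} = \dom{\Gamma}$, and from the single-witness premise $M[x:=v\,w] \equiv M'[x':=v\,w]$ of the $\sim\lambda$ rule I would recover $M[x:=v\,z] \equiv M'[x':=v\,z]$ for \emph{every} fresh $z$ (by the renaming identity \refsublemma{lemma:exp}), which lets the body-typing premises be transported by the induction hypothesis with the context changed from $\Gamma , z : A$ to $\Delta , z : A'$. Second, applying \typerule{abs} in $\Delta$ produces the abstraction with domain equal to its annotation $A'$, i.e. type $\Pi[y:A']B$, whereas the goal keeps $\Pi[y:A]B$; I would absorb this mismatch with \typerule{conv}, using $\Pi[y:A']B \alphaconv \Pi[y:A]B$ (by the $\sim\Pi$ rule from $A \alphaconv A'$) and rebuilding the validity premise $\Delta \vdash \Pi[y:A]B : s_3$ from the \typerule{prod} rule applied to the premises $\Gamma \vdash A : s_1$ and $\Gamma , z : A \vdash B[y:=z] : s_2$ transported by the induction hypothesis. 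All of these appeals are to genuine sub-derivations of the \typerule{abs} derivation.

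The genuine obstacle is the \typerule{var} case, and it is what makes the lemma delicate. There the only premise is \ok{\Gamma}, so no typing sub-derivation is at hand; yet to rewrite the context type $A''$ (with $A \alphaconv A''$ coming from $\Gamma \approx_\alpha \Delta$) back into the required $A$ via \typerule{conv}, one must exhibit $\Delta \vdash A : c\,s$, a derivation that is not a structural sub-term of the \typerule{var} judgment. My plan is to fold in an $\alpha$-aware strengthening of \cref{lemma:validDecl}, namely $\Gamma \approx_\alpha \Delta \rightarrow \ok{\Gamma} \rightarrow (x,A) \in \Gamma \rightarrow \exists s.\ \Delta \vdash A : c\,s$, proven by induction on the \ok{\Gamma} derivation: at the topmost \typerule{cons} entry I take its typing premise $\Gamma' \vdash A : s$, transport it along $\Gamma' \approx_\alpha \Delta'$ with \texttt{closAlphaAsg}, and then thin it to $\Delta$ using \cref{lemma:weakening}; deeper entries follow by the induction hypothesis and thinning. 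Because this auxiliary recurses only into the \ok{\Gamma} premise (and its own premises), folding it into the same simultaneous induction keeps every recursive call on a sub-derivation of the original judgment, so the whole argument descends structurally. It is worth stressing that, thanks to the extra premise added to \typerule{app}, none of these steps invokes syntactic validity (\cref{lemma:typeValidity}) or the substitution lemma: the dependency circle is broken, and everything bottoms out in \cref{lemma:validDecl}, thinning, and the fact that \alphaconvg\ is a congruence contained in \betaconvg.
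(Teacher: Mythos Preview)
Your proposal is correct and follows the same overall strategy as the paper: simultaneous structural induction on the typing derivation, using \typerule{conv} to absorb the $\alpha$-mismatches that arise when reconstructing each rule in $\Delta$. The paper displays only the \typerule{abs} and \typerule{app} cases and treats them essentially as you do; your explicit handling of the \typerule{var} case---folding an $\alpha$-aware variant of \texttt{validDecl} into the mutual recursion so that the needed call to \texttt{closAlphaAsg} lands on a genuine sub-derivation inside the premise \ok{\Gamma}---is precisely the bookkeeping a structural argument requires there, even though the paper leaves it implicit.
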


\newcommand{\alphaconvcxt}{\ensuremath{\ {\approx}\alpha}\ }
\newcommand{\alphaconvcxtg}{\ensuremath{{\approx}\alpha}}

\begin{proof}
By simultaneous induction on the structure of the typing derivation.
We only show some cases:

\begin{itemize}
    \item Case \typerule{abs}.
    We have:
    \abstree[]
    and $\lambda[x:A]M \alphaconv \lambda[x':A']M'$ which follows from  $A\alphaconv A'$ and 
    ${M[x:=w]\equiv M'[x':=w]}$ 
    for some $w$ not in $\fv{M}-x$ nor in $\fv{M'}-x'$. We have to show: $\Delta \vdash \lambda[x':A']M' : \Pi[y:A]B$. 
    To use the $\lambda$-abstraction rule to derive $\Delta \vdash \lambda[x':A']M' : \Pi[y:A']B$, and so then use the conversion rule to rewrite $\Pi[y:A']B$ into $\Pi[y:A]B$, first we need to show:
    \begin{enumerate}[label=(\alph*)]
        \item $\Delta \vdash A : s_1$;
        \item $\Delta,z:A' \vdash B[y:=z] : s_2$ for all $z\not\in\dom{\Delta}$, and;
        \item $\Delta,z:A' \vdash M[x:=z] : B[y:=z]$ for all $z\not\in\dom{\Delta}$.
    \end{enumerate}
    (a) is immediate by the IH (ii).
    As to (b) and (c), we show only the latter since the other is analogous. Let $z$ be some name not in $\dom{\Delta}$. Then we also have that $z\not\in\dom{\Gamma}$ by definition of \alphaconvcxtg.
    Next, by congruence on one of the hypotheses we have
    ${M[x:=w][w:=z]\equiv M'[x':=w][w:=z]}$,
    and by \refsublemma{lemma:exp}
    we can obtain $M[x:=z]\equiv M'[x':=z]$.
    Then, since $\alpha$-conversion is reflexive, we have
    $M[x:=z]\alphaconv M'[x':=z]$,
    so we can use the IH (ii) and get ${\Delta \vdash \lambda[x':A']M' : \Pi[y:A']B}$. Now, it is easy to show that $\Pi[y:A]B\alphaconv\Pi[y:A']B$, hence by the IH (ii) we have $\Delta\vdash\Pi[y:A']B$, and so we can use the conversion rule as explained earlier.
    \item Case \typerule{app}. 
    There we have:
    \apptree[]
    $M\alphaconv M'$ and $N\alphaconv N'$, and we must show $\Delta \vdash M' \app N' : B[x:=N]$.
    First, note that $\iota,x:=N$ and $\iota,x:=N'$ assigns $\alpha$-convertibles terms to every variable, thus we have $\iota,x:=N \alphaconv \iota,x:=N' \downharpoonright \mathsf{fv}\,B$. 
    Then, by \refsublemma{lemma:subAlpha} we obtain $B[x:=N'] \alphaconv B[x:=N]$.
    Now, by the IH~(ii) on each premise followed by the application rule we obtain ${\Delta \vdash M' \app N' : B[x:=N']}$. Finally, by the IH~(ii) again we can derive $\Delta \vdash B[x:=N] : s$ and so we can rewrite the types to obtain the desired goal. \qedhere
\end{itemize}
\end{proof}

As to the second part of the lemma 
we have that predicates are closed under $\alpha$-conversion as well. Its proof follows directly by using closure of the subjects and syntactic validity:

\begin{corollary}
\Verb|closAlphaPr : ∀ {Γ Δ M A B} → Γ ≈α Δ → A ∼α B → Γ ⊢ M : A → Δ ⊢ M : B|
\end{corollary}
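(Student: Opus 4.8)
The plan is to reduce the statement to the subject-conversion result \cref{lemma:closureAlpha} together with syntactic validity (\cref{lemma:typeValidity}) and a single application of the conversion rule; no fresh induction is needed, which is why this is phrased as a corollary. Given $\Gamma \alphaconvcxtg \Delta$, $A \alphaconvg B$ and $\Gamma \vdash M : A$, I would first obtain $\Delta \vdash M : A$ by invoking part (ii) of \cref{lemma:closureAlpha}, using reflexivity of $\alphaconvg$ (available from \cref{lemma:alphaCongruence}) to supply the premise $M \alphaconv M$. It then remains to rewrite the predicate from $A$ to $B$. Since $\alphaconvg$ is contained in $\betaconvg$ (the latter being the equivalence closure of $\alphaconvg \cup \betaredg$), the hypothesis $A \alphaconv B$ immediately yields $A \betaconv B$, which is exactly the conversion premise the rule \typerule{conv} requires.

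To supply the remaining premise of the conversion rule, namely that $B$ is a valid type, I would apply syntactic validity (\cref{lemma:typeValidity}) to $\Delta \vdash M : A$, obtaining a sort $s$ with either $A \equiv c\,s$ or $\Delta \vdash A : c\,s$. In the first case, inversion on the derivation of $A \alphaconv B$ (whose only applicable constructor when the left-hand side is a constant is $\sim c$) forces $B \equiv c\,s$, so $A \equiv B$ and the goal $\Delta \vdash M : B$ follows directly from the already-derived $\Delta \vdash M : A$, with no conversion step at all. In the second case, I would convert the type \emph{of} the type: applying part (ii) of \cref{lemma:closureAlpha} once more, now with $\Delta \alphaconvcxtg \Delta$ (again by reflexivity) and $A \alphaconv B$, produces $\Delta \vdash B : c\,s$. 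With $\Delta \vdash M : A$, $A \betaconv B$ and $\Delta \vdash B : c\,s$ in hand, the rule \typerule{conv} delivers $\Delta \vdash M : B$.

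I do not anticipate a genuine obstacle, but the one point that deserves care is the constant case of syntactic validity: one must not try to feed a sort $c\,s$ into the conversion rule as though it were an arbitrary well-typed type, but instead observe that $\alphaconvg$ collapses to syntactic identity on constants, so that $A$ and $B$ coincide and the conversion step is simply skipped. Everything else is a mechanical reuse of the subject-conversion lemma, the containment of $\alphaconvg$ in $\betaconvg$, and a single invocation of \typerule{conv}.
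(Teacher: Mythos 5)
Your proposal is correct and follows essentially the same route as the paper, which proves this corollary directly from closure of subjects (\cref{lemma:closureAlpha}) and syntactic validity (\cref{lemma:typeValidity}) followed by the conversion rule. Your additional care in the constant case of syntactic validity (where $\alphaconvg$ forces $B \equiv A$ and the conversion step is skipped, since a top sort need not itself be typable) is exactly the detail one must fill in to make the paper's one-line argument go through.
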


\subsection{Closure Under Substitution}

Having established thinning, syntactic validity and closure under $\alpha$-conversion, we are almost ready to prove the substitution lemma. But first, some quick preparatory definitions. Let us define well-typed substitutions from variables in $\Gamma$ to terms of appropriate type under $\Delta$ by:
\begin{Verbatim}
σ ∶ Γ ⇀ Δ = ∀ {x A} → (x , A) ∈ Γ → Δ ⊢ σ x ∶ A ∙ σ
\end{Verbatim}
The next result will provide us the required hypothesis to invoke the induction hypothesis in the case of $\lambda$-abstractions and $\Pi$-types:

\begin{lemma}
\label{lemma:renaming}
\Verb|subRen : ∀ {Γ Δ x y s A σ} → x ∉ dom Γ → y ∉ dom Δ → Γ ⊢ A : c s|\\
\Verb| → Δ ⊢ A ∙ σ : c s → σ ∶ Γ ⇀ Δ → (σ , x := v y) ∶ (Γ ‚ x ∶ A) ⇀ (Δ ‚ y ∶ A ∙ σ)|
\end{lemma}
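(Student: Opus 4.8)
The plan is to unfold the definition of a well-typed substitution and discharge the required typing for an arbitrary declaration of the extended source context. So let $(z, C) \in (\Gamma, x : A)$; the goal is to derive $(\Delta, y : \sub{A}{\sigma}) \vdash (\sigma, x := \var{y})\,z : \sub{C}{(\sigma, x := \var{y})}$. Before the case analysis I would establish, once and for all, that the extended target context is well-formed, i.e.\ $\ok{(\Delta, y : \sub{A}{\sigma})}$: this is an instance of the \typerule{cons} rule, whose premises are $\ok{\Delta}$ (obtained from the hypothesis $\Delta \vdash \sub{A}{\sigma} : \const{s}$ by \cref{lemma:contextValidityAsg}), the typing $\Delta \vdash \sub{A}{\sigma} : \const{s}$ itself, and the side condition $y \notin \dom{\Delta}$, which is a hypothesis.

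Then I would proceed by case analysis on the membership $(z, C) \in (\Gamma, x : A)$. In the head case $z \equiv x$ and $C \equiv A$, so by definition of update the substitution evaluates to $(\sigma, x := \var{y})\,x = \var{y}$. Since $x \notin \dom{\Gamma}$ and $\Gamma \vdash A : \const{s}$, the freshness corollary (\cref{lemma:freshness}, \Verb|freshAsg|) yields $\fresh{x}{A}$, whence \cref{lemma:updFresh} rewrites the target type $\sub{A}{(\sigma, x := \var{y})}$ to $\sub{A}{\sigma}$. The goal becomes $(\Delta, y : \sub{A}{\sigma}) \vdash \var{y} : \sub{A}{\sigma}$, which is exactly the \typerule{var} rule applied to the head declaration of the (already verified) well-formed context $\Delta, y : \sub{A}{\sigma}$.

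In the tail case $(z, C) \in \Gamma$, so $z \in \dom{\Gamma}$ and hence $z \not\equiv x$ because $x \notin \dom{\Gamma}$; the update therefore evaluates to $(\sigma, x := \var{y})\,z = \sigma\,z$. Using $\ok{\Gamma}$ (again from \cref{lemma:contextValidityAsg}) together with $x \notin \dom{\Gamma}$, the freshness corollary (\cref{lemma:freshness}, \Verb|freshCxt|) gives $\fresh{x}{C}$, so \cref{lemma:updFresh} rewrites the target type to $\sub{C}{\sigma}$. The hypothesis $\sigma : \Gamma \rightharpoonup \Delta$ then supplies $\Delta \vdash \sigma\,z : \sub{C}{\sigma}$, and thinning (\cref{lemma:weakening}) along the evident inclusion $\Delta \subseteq (\Delta, y : \sub{A}{\sigma})$ lifts this judgment into the extended context, which completes the case.

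The argument is essentially bookkeeping, with no deep obstacle. The one point that needs care, and that drives the whole proof, is recognising that both target types $\sub{A}{(\sigma, x := \var{y})}$ and $\sub{C}{(\sigma, x := \var{y})}$ collapse to $\sub{A}{\sigma}$ and $\sub{C}{\sigma}$ through \cref{lemma:updFresh}; this is precisely what rules out capture when extending $\sigma$ by $x := \var{y}$, and it is what forces the appeals to the freshness of $x$ for $A$ and for every type declared in $\Gamma$.
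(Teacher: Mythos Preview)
Your proposal is correct and follows essentially the same approach as the paper: unfold the definition of well-typed substitution, establish $\ok{(\Delta,y:\sub{A}{\sigma})}$ via \typerule{cons} and context validity, and then do a two-way case split (head versus tail of the extended context, which amounts to the paper's split on $z\equiv x$ versus $z\not\equiv x$), in each branch using the freshness corollary together with \refsublemma{lemma:updFresh} to collapse the updated substitution on the type, and finishing respectively with \typerule{var} and with the hypothesis $\sigma:\Gamma\rightharpoonup\Delta$ plus thinning. The only cosmetic difference is that you case on the list membership directly and then derive $z\not\equiv x$, whereas the paper cases on the equality first; the content is the same.
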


\begin{proof}
By definition of well-typed substitutions, we must show that for any declaration $(z,B)\in\Gamma,x:A$, it follows that $\Delta,y:A\subs\sigma \vdash (\sigma,x:=y)z : B\subs (\sigma,x:=y)$. 
First, notice that by \refsublemma{lemma:contextValidityAsg} we have \ok{\Delta}. Also, since $y\not\in\dom{\Delta}$, then by the $\vdash$cons rule we have \ok{\Delta,y:A\subs\sigma} as well.
Next we analyze whether $z$ is equal to $x$ or not:
\begin{itemize}
    \item Case $z \equiv x$. 
    Then $A \equiv B$ must also be the case, so the goal becomes ${\Delta,y:A\subs\sigma \vdash y : A\subs \sigma,x:=y}$. 
    By $\vdash$var we have $\Delta,y:A\subs\sigma \vdash y : A\subs\sigma$. Then, by the \cref{lemma:freshness} we have \fresh{x}{A}, hence by \refsublemma{lemma:updFresh} we know that $A\subs\sigma,x:=y\equiv A\subs\sigma$, and so we can rewrite the type in the previous derivation into $A\subs \sigma,x:=y$ to obtain our goal.
    
    \item Case $z \not\equiv x$. Then $(z,B)\in\Gamma$ and we must show $\Delta,y:A\subs\sigma \vdash \sigma z : B\subs\sigma,x:=y$. First, by hypothesis we have $\Delta \vdash \sigma z : A\subs\sigma$, so by thinning we also have $\Delta,y: A\subs\sigma \vdash \sigma z : B\subs\sigma,x:=y$. Then we can proceed analogously to the previous case and rewrite the type $B\subs\sigma,x:=y$ into $B\subs\sigma$.\qedhere
\end{itemize}
\end{proof}

\noindent Then we have that typing is closed under well-typed substitutions:

\begin{lemma}
\label{lemma:substitution}
\Verb|closureSub : ∀ {Γ Δ M A σ} → σ ∶ Γ ⇀ Δ → Δ ok → Γ ⊢ M ∶ A|\\
\Verb| → Δ ⊢ M ∙ σ ∶ A ∙ σ|
\end{lemma}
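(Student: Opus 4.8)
The plan is to proceed by structural induction on the derivation of $\Gamma \vdash M : A$, using \cref{lemma:renaming} to supply the well-typed substitutions needed whenever the induction hypothesis must be invoked under an extended context. The leaf and near-leaf cases are routine. Since constants — and hence sorts — are invariant under substitution, the \typerule{sort} case follows from the given $\Delta\,\mathsf{ok}$ and the axiom side-condition; the \typerule{var} case is discharged immediately by applying the hypothesis $\sigma : \Gamma \rightharpoonup \Delta$ to the declaration $(x,A)\in\Gamma$, because $\var{x}\subs\sigma = \sigma\,x$. For \typerule{conv} I would apply the induction hypothesis to the first and third premises, transport the conversion $A \betaconv B$ to $A\subs\sigma \betaconv B\subs\sigma$ by \refsublemma{lemma:compatbetasub}, and reassemble with the conversion rule.

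The heart of the argument lies in the \typerule{prod} and \typerule{abs} cases, where Stoughton's substitution renames the bound variable to a canonical name. In \typerule{prod} the subject becomes $\Pi[x' : A\subs\sigma]\,(B\subs(\sigma, x := \var{x'}))$ with $x' = X(\sigma, \fv{B}-x)$, while the type $s_3$ is fixed. To re-apply the \typerule{prod} rule I must supply $\Delta \vdash A\subs\sigma : s_1$, which is the induction hypothesis on the first premise, and, for every $w\notin\dom{\Delta}$, a derivation of $\Delta, w : A\subs\sigma \vdash (B\subs(\sigma, x := \var{x'}))[x' := \var{w}] : s_2$. For the latter I would pick a name $z$ fresh for $\Gamma$ and $\fv{B}$, instantiate the cofinitely-quantified premise at $\Gamma, z : A \vdash B[x := z] : s_2$, and push through it the substitution $(\sigma, z := \var{w})$, which is well-typed from $\Gamma, z : A$ to $\Delta, w : A\subs\sigma$ by \cref{lemma:renaming} (its side-conditions follow from $z\notin\dom{\Gamma}$, $w\notin\dom{\Delta}$, the first premise together with its image under the induction hypothesis, and $\sigma : \Gamma \rightharpoonup \Delta$); the target context is well-formed by \typerule{cons}. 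The induction hypothesis then gives $\Delta, w : A\subs\sigma \vdash (B[x := z])\subs(\sigma, z := \var{w}) : s_2$, and \refsublemma{lemma:exp}, using $z\notin\fv{B}-x$, rewrites the subject to $B\subs(\sigma, x := \var{w})$. Finally \refsublemma{lemma:factor} gives $(B\subs(\sigma, x := \var{x'}))[x' := \var{w}] \alphaconv B\subs(\sigma, x := \var{w})$ — here it is essential that $x'$ was chosen fresh for $\sigma$ on $\fv{B}-x$ — so closure under $\alpha$-conversion of the subject transports the derivation into the exact form the \typerule{prod} premise demands. The \typerule{abs} case layers the body judgement on top of this: instantiating $\Gamma, z : A \vdash M[x := z] : B[y := z]$ at a $z$ fresh also for $\fv{M}$ and pushing $(\sigma, z := \var{w})$ through it yields, after two applications of \refsublemma{lemma:exp}, $\Delta, w : A\subs\sigma \vdash M\subs(\sigma, x := \var{w}) : B\subs(\sigma, y := \var{w})$, and both subject and predicate are then reconciled with the $\alpha$-convertible forms $(M\subs(\sigma, x := \var{x'}))[x' := \var{w}]$ and $(B\subs(\sigma, y := \var{y'}))[y' := \var{w}]$ required by the rule, using the subject and predicate variants of closure under $\alpha$-conversion.

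For \typerule{app}, the induction hypotheses on the three premises give $\Delta \vdash M\subs\sigma : \Pi[x' : A\subs\sigma]\,(B\subs(\sigma, x := \var{x'}))$, $\Delta \vdash N\subs\sigma : A\subs\sigma$, and $\Delta \vdash (B[x := N])\subs\sigma : s$. Firing the \typerule{app} rule would produce the subject $(M\subs\sigma)\cdot(N\subs\sigma)$ at type $(B\subs(\sigma, x := \var{x'}))[x' := N\subs\sigma]$, whereas the goal asks for $(B[x := N])\subs\sigma$. These are not identical but only $\alpha$-convertible: by \refsublemma{lemma:factor} the former reduces to $B\subs(\sigma, x := N\subs\sigma)$ and by \refsublemma{lemma:subDistribUpd} the latter equals that very term. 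I would therefore first transport the third induction hypothesis along this $\alpha$-equality to discharge the rule's third premise, fire \typerule{app}, and finally transport the resulting predicate back to $(B[x := N])\subs\sigma$ by closure under $\alpha$-conversion. This is precisely where the extra premise $\Gamma \vdash B[x := N] : s$ of the application rule pays off: it hands us the well-typedness of the substituted codomain directly, sidestepping the ill-founded reconstruction via syntactic validity and generation discussed at the start of \cref{sec:metatheory}.

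The main obstacle is not any single case in isolation but the systematic discrepancy, pervasive across the binder and application cases, between the canonical bound name $x' = X(\sigma, \ldots)$ that Stoughton's operation deterministically selects and the cofinitely-quantified fresh name ($w$ or $z$) dictated by the rules and the generalized induction hypothesis. Each such discrepancy is reconcilable only up to $\alpha$-conversion, with \refsublemma{lemma:factor} as the workhorse, so the whole proof rests on having closure under $\alpha$-conversion already established; and constant care is required to keep the freshness side-conditions of \refsublemma{lemma:exp} and \refsublemma{lemma:factor} satisfied when the auxiliary names are chosen.
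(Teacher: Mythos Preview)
Your proposal is correct and follows essentially the same route as the paper: structural induction on the typing derivation, using \cref{lemma:renaming} to extend the well-typed substitution under binders, \refsublemma{lemma:exp} and \refsublemma{lemma:factor} to reconcile the canonical Stoughton names with the cofinitely-quantified ones, and closure under $\alpha$-conversion (for both subject and predicate) to finish each binder and the application case. The only cosmetic difference is that the paper picks the auxiliary fresh name as $X'(\dom{\Gamma})$ and then derives the term-level freshness via \cref{lemma:freshness}, whereas you pick it fresh for both the context and the relevant term up front; and you are more explicit than the paper about transporting the third \typerule{app} premise across the $\alpha$-equality before firing the rule.
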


\begin{proof}
By simultaneous induction on the typing derivation. We only show some cases:
\begin{itemize}
    \item Case \typerule{abs}. We have:
    \abstree[]
    and we must derive $\Delta \vdash \lambda[x':A\subs\sigma](M\subs\sigma,x:=x'):\Pi[y':A\subs\sigma](B\subs\sigma,y:=y')$, where $x'$ and $y'$ are definitional equal to $X(\sigma,\fv{M}-x)$ and $X(\sigma,\fv{B}-y)$ respectively. To use the $\lambda$-abstraction rule to derive our goal first we need to show:
    \begin{enumerate}[label=(\alph*)]
    \item $\Delta\vdash A\subs\sigma : s_1$
    \item $\Delta , z\,{:}\,A\subs\sigma \vdash (B\subs\sigma,y \coloneq y')[y' \coloneq z] : s_2$ for all $z\not\in\dom{\Delta}$, and;
    \item $\Delta , z\,{:}\,A\subs\sigma \vdash (M\subs\sigma,x\,{:=}\,x')[x' \coloneq z] : (B\subs\sigma,y \coloneq y')[y' \coloneq z]$ for all $z\not\in\dom{\Delta}$.
    \end{enumerate}
    (a) is immediate from the IH. 
    As to (b) and (c), again, we only show the latter. Let $z$ be any name not in $\Delta$ and let $w=X'(\dom{\Gamma})$. 
    By \cref{lemma:chiPrime} we have $w\not\in\dom{\Gamma}$, thus by \cref{lemma:renaming} ${(\sigma,w:=z) : (\Gamma,w:A) \rightharpoonup (\Delta,z:A\subs\sigma)}$. Next, by the IH with the previous result we have:
    \begin{equation}
    \Delta,z: A\subs\sigma \vdash M[x:=w]\subs(\sigma,w:=z) : B[y:=w]\subs(\sigma,w:=z)
    \end{equation}
    Now, to derive our goal it suffices to show that the subjects in the goal (b) and in (2) are $\alpha$-convertible, and similarly with the predicates therein. As to the subjects, first,
    by \refsublemma{lemma:freshness} we note that $w\not\in\fv{M}-x$, thus by 
    \refsublemma{lemma:updFresh}
    we have $M[x:=w]\subs(\sigma,w:=z') \equiv M\subs(\sigma,x:=z')$.
    And second, by \cref{lemma:chi} we have ${x' \freshrg (\sigma,\fv{M}-x)}$, hence we can use \refsublemma{lemma:factor} and derive that ${M\subs(\sigma,x:=z) \alphaconv (M\subs\sigma,x:=x')[x':=z]}$. We can reason analogously to show that the types are also $\alpha$-convertible.
    Finally, by closure under $\alpha$-conversion we can obtain (b).
    
    \item Case \typerule{app}. We have:
    \apptree[]
    and we must show $\Delta \vdash M\app N\subs\sigma : B[x:=N]\subs\sigma$. By the IH on each premise followed by the application rule we have
    ${\Delta \vdash (M\app N)\subs\sigma : (B\subs\sigma,x:=x')[x':=N\subs\sigma]}$
    where $x'=X(\sigma,\fv{B}-x)$. Next, we can use \refsublemma{lemma:factor} and derive
    ${(B\subs\sigma,x:=x')[x':=N\subs\sigma] \alphaconv B\subs\sigma,x:=(N\subs\sigma)}$.
    Finally, by \refsublemma{lemma:subDistribUpd} we have
    $B\subs\sigma,x:=(N\subs\sigma) \equiv B[x:=N]
    \subs\sigma$ so by closure under $\alpha$-conversion we obtain our goal.

    \item Case \typerule{conv}. By the IH and \refsublemma{lemma:compatbetasub}.\qedhere
\end{itemize}
\end{proof}

As a particular case of the substitution lemma we have the following cut result.
First, let us extend the substitution operation pointwise to contexts by: \Verb|_∙∙_ : Cxt → Sub → Cxt|.
Then we have the next result for unary substitutions (whose proof is analogous to the proof of \cref{lemma:renaming}): 

\begin{lemma}
\label{lemma:unary}
\Verb|subUnary : ∀ {x Γ N A} → x ∉ dom Γ → Γ ⊢ A → Γ ∙∙ ι ⊢ N ∶ A ∙ ι|\\
\Verb| → (ι , x := N) ∶ (Γ ‚ x ∶ A) ⇀ Γ ∙∙ ι|
\end{lemma}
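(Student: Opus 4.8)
The plan is to unfold the definition of well-typed substitutions and reduce the goal to showing that for an arbitrary declaration $(z,B) \in \Gamma, x : A$ we can derive $\Gamma \mathbin{\bullet\bullet} \iota \vdash (\iota, x := N)\,z : B \subs (\iota, x := N)$, proceeding by case analysis on whether $z$ coincides with $x$. This mirrors the structure of the proof of \cref{lemma:renaming}, only simpler: here $\sigma$ is the identity and the target context $\Gamma \mathbin{\bullet\bullet} \iota$ is \emph{not} extended with a fresh declaration, so no appeal to thinning is needed.

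First I would record the two well-formedness facts used throughout. From the hypothesis $\Gamma \mathbin{\bullet\bullet} \iota \vdash N : A \subs \iota$, \refsublemma{lemma:contextValidityAsg} yields $\ok{\Gamma \mathbin{\bullet\bullet} \iota}$; and from $\Gamma \vdash A$, the same lemma yields $\ok{\Gamma}$, which is what the freshness corollary requires.

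In the case $z \equiv x$ we have $B \equiv A$ (since $x \notin \dom{\Gamma}$, the only occurrence of $x$ in $\Gamma, x : A$ is the topmost one) and $(\iota, x := N)\,z = N$, so the goal reads $\Gamma \mathbin{\bullet\bullet} \iota \vdash N : A \subs (\iota, x := N)$. Because $x \notin \dom{\Gamma}$ and $\Gamma \vdash A$, \cref{lemma:freshness} gives $\fresh{x}{A}$, whence \refsublemma{lemma:updFresh} rewrites $A \subs (\iota, x := N)$ into $A \subs \iota$; the goal then becomes exactly the hypothesis $\Gamma \mathbin{\bullet\bullet} \iota \vdash N : A \subs \iota$. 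In the case $z \not\equiv x$ we have $(z, B) \in \Gamma$ and $(\iota, x := N)\,z$ yields the variable $z$, so the goal reads $\Gamma \mathbin{\bullet\bullet} \iota \vdash z : B \subs (\iota, x := N)$. Since $(z, B) \in \Gamma$ entails $(z, B \subs \iota) \in \Gamma \mathbin{\bullet\bullet} \iota$ and we have $\ok{\Gamma \mathbin{\bullet\bullet} \iota}$, the $\vdash$var rule derives $\Gamma \mathbin{\bullet\bullet} \iota \vdash z : B \subs \iota$; finally, \cref{lemma:freshness} (using $\ok{\Gamma}$ and $x \notin \dom{\Gamma}$) gives $\fresh{x}{B}$, so by \refsublemma{lemma:updFresh} the target type $B \subs (\iota, x := N)$ coincides with $B \subs \iota$ and we are done.

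I do not anticipate a genuine obstacle, as the argument is a direct specialization of \cref{lemma:renaming}. The one point demanding care is the bookkeeping around the pointwise-substituted context $\Gamma \mathbin{\bullet\bullet} \iota$: one must observe that membership transports as $(z, B) \in \Gamma$ implies $(z, B \subs \iota) \in \Gamma \mathbin{\bullet\bullet} \iota$, and that the well-formedness of $\Gamma \mathbin{\bullet\bullet} \iota$ is most cheaply obtained from the typing hypothesis on $N$ rather than reconstructed; moreover the freshness facts must be taken with respect to the \emph{original} types $A$ and $B$ (which is why $\ok{\Gamma}$ is extracted from $\Gamma \vdash A$), so that \refsublemma{lemma:updFresh} applies.
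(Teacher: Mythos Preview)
Your proposal is correct and follows exactly the route the paper indicates: the paper does not spell out the argument but simply says the proof is analogous to that of \cref{lemma:renaming}, and your case analysis on $z \equiv x$ versus $z \not\equiv x$, with the use of \cref{lemma:freshness} and \refsublemma{lemma:updFresh}, is precisely that analogy carried out. Your observation that thinning is unnecessary here (since the target context is not extended) and that in the $z \not\equiv x$ case one builds the derivation directly via \typerule{var} rather than via a pre-existing well-typed-substitution hypothesis is the right adaptation.
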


Then we have the cut lemma:

\begin{lemma}
\Verb|cut : ∀ {Γ M N A B x} → Γ ‚ x ∶ A ⊢ M ∶ B → Γ ⊢ N ∶ A|\\
\Verb| → Γ ⊢ M [ x := N ] ∶ B [ x := N ]|
\end{lemma}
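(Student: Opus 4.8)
The plan is to derive the cut rule as an instance of the substitution lemma (\cref{lemma:substitution}) specialised to the renaming substitution $\iota , x := N$. Since $M[x:=N] = M \subs (\iota , x := N)$ and likewise $B[x:=N] = B \subs (\iota , x := N)$, an application of closureSub with this substitution would produce exactly the subject and predicate we are after. The only mismatch is that \cref{lemma:unary} supplies a well-typed substitution whose target is the $\iota$-normalised context $\Gamma\bullet\bullet\iota$ rather than $\Gamma$ itself; bridging this gap by $\alpha$-conversion is the crux of the argument.

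First I would gather the hypotheses required by \cref{lemma:unary}. Applying \refsublemma{lemma:contextValidityAsg} to $\Gamma, x : A \vdash M : B$ gives $\ok{\Gamma, x : A}$, and inverting the \typerule{cons} rule extracts $\ok{\Gamma}$, the side condition $x \not\in \dom{\Gamma}$, and the well-formedness $\Gamma \vdash A : c\,s$ for some sort $s$. The last premise of \cref{lemma:unary} asks for $N$ to be typed over $\Gamma\bullet\bullet\iota$, so I must recast the hypothesis $\Gamma \vdash N : A$ accordingly. To this end I would first establish the auxiliary fact that $\iota$-normalisation preserves $\alpha$-equivalence, namely $T \subs \iota \alphaconvg T$ for every term $T$: by \refsublemma{lemma:subComp} and \refsublemma{lemma:subEqRes} we have $(T \subs \iota)\subs\iota \equiv T \subs \iota$ (because $\iota \odot \iota$ restricts to $\iota$), whence \cref{lemma:characterizatinoAlphaIota} delivers $T \subs \iota \alphaconvg T$. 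Applying this entrywise yields $\Gamma \alphaconvcxtg \Gamma\bullet\bullet\iota$, and then closure of predicates under $\alpha$-conversion (the corollary following \cref{lemma:closureAlpha}) turns $\Gamma \vdash N : A$ into $\Gamma\bullet\bullet\iota \vdash N : A \subs \iota$. Together with $x \not\in \dom{\Gamma}$ and the validity of $A$, this discharges the premises of \cref{lemma:unary} and produces $(\iota , x := N) : (\Gamma, x : A) \rightharpoonup \Gamma\bullet\bullet\iota$.

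With the well-typed substitution at hand I would invoke \cref{lemma:substitution}. Its remaining hypotheses are $\ok{\Gamma\bullet\bullet\iota}$ --- obtained from $\ok{\Gamma}$ and $\Gamma \alphaconvcxtg \Gamma\bullet\bullet\iota$ by part~(i) of \cref{lemma:closureAlpha} --- and the original $\Gamma, x : A \vdash M : B$. This gives $\Gamma\bullet\bullet\iota \vdash M \subs (\iota , x := N) : B \subs (\iota , x := N)$, that is, $\Gamma\bullet\bullet\iota \vdash M[x:=N] : B[x:=N]$. Finally, using the converse relation $\Gamma\bullet\bullet\iota \alphaconvcxtg \Gamma$ together with reflexivity of $\alpha$-conversion on the subject, part~(ii) of \cref{lemma:closureAlpha} transports this judgment back onto $\Gamma$, giving $\Gamma \vdash M[x:=N] : B[x:=N]$ as required.

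The inversions and premise bookkeeping are routine; the one step carrying real content is the passage between $\Gamma\bullet\bullet\iota$ and $\Gamma$. I expect the main obstacle to be establishing and then exploiting $T \subs \iota \alphaconvg T$ uniformly at the level of contexts, subjects and predicates, so that the judgment synthesised over the $\iota$-normalised context can be moved back onto $\Gamma$ without perturbing either the subject or the assigned type.
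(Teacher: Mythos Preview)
Your proposal is correct and follows essentially the same route as the paper's proof, which is stated only as a one-line list of ingredients (\cref{lemma:typeValidity}, closure under $\alpha$-conversion, \cref{lemma:unary}, context validity, \cref{lemma:substitution}). You have correctly unpacked the key step the paper leaves implicit: the substitution lemma together with \cref{lemma:unary} lands in the $\iota$-normalised context $\Gamma\bullet\bullet\iota$, and one must bridge back to $\Gamma$ via the pointwise $\alpha$-equivalence $\Gamma \alphaconvcxtg \Gamma\bullet\bullet\iota$ using \cref{lemma:closureAlpha} and its corollary. Your derivation of $T\subs\iota \alphaconv T$ from \refsublemma{lemma:subComp}, \refsublemma{lemma:subEqRes} and \cref{lemma:characterizatinoAlphaIota} is sound; the explicit mention of syntactic validity in the paper's list is accounted for in your argument through the corollary \texttt{closAlphaPr}, whose proof relies on it.
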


\begin{proof}
By \cref{lemma:typeValidity}, closure under $\alpha$-conversion, \cref{lemma:unary}, context validity and \cref{lemma:substitution}.
\end{proof}

\subsection{A More Standard Presentation}

\newcommand{\abstreestd}{
    \begin{prooftree}
        \AxiomC{$\Gamma \stdpts A : s_1$}
        \AxiomC{$\Gamma, z : A \stdpts B[y:=z] : s_2$}
        \AxiomC{$\Gamma, z : A \stdpts M[x:=z] : B[y:=z]$}           
		\LeftLabel{\typerule{abs}}
		\RightLabel{$\begin{cases}\mathcal{R}\,s_1\,s_2\,s_3\\z\not\in\fv{M}-x\\z\not\in\fv{B}-y\end{cases}$}
        \TrinaryInfC{$\Gamma \stdpts \lambda[ x : A ] M : \Pi[ y : A ] B$}
    \end{prooftree}
}

\newcommand{\abstreestdNoLabels}{
    \begin{prooftree}
        \AxiomC{$\Gamma \stdpts A : s_1$}
        \AxiomC{$\Gamma, z : A \stdpts B[y:=z] : s_2$}
        \AxiomC{$\Gamma, z : A \stdpts M[x:=z] : B[y:=z]$}           
        \TrinaryInfC{$\Gamma \stdpts \lambda[ x : A ] M : \Pi[ y : A ] B$}
    \end{prooftree}
}

\newcommand{\apptreestd}[1][\typerule{app}]{
    \begin{prooftree}
        \AxiomC{$\Gamma \stdpts M : \Pi[ x : A ] B$}
        \AxiomC{$\Gamma \stdpts N : A$}
        \LeftLabel{\small#1}
        \BinaryInfC{$\Gamma \stdpts M \app N : B[x:=N]$}
    \end{prooftree}
}

\begin{figure}[t]
    \centering
    \small
    
    \AxiomC{}
    \LeftLabel{\small\typerule{nil}} 
    \UnaryInfC{$\stdok{[]}$}
    \bottomAlignProof
    \DisplayProof
    \quad
    \AxiomC{\stdok{\Gamma}}
    \AxiomC{$\Gamma \vdash A : s$}
    \LeftLabel{\footnotesize\typerule{cons}} 
    \RightLabel{($x \not\in \dom{\Gamma}$)}
    \BinaryInfC{\stdok{\Gamma,x:A}}
    \bottomAlignProof
    \DisplayProof
    
    \bigskip
    
    \AxiomC{$\stdok{\Gamma}$}
    \LeftLabel{\small\typerule{sort}}
    \RightLabel{($\mathcal{A}\,s_1\,s_2$)}
    \UnaryInfC{$\Gamma \vdash s_1 : s_2$}
    \bottomAlignProof
    \DisplayProof
    
    \bigskip
    
    \AxiomC{$\Gamma \stdpts A : s_1$}
    \AxiomC{$\Gamma,y:A \stdpts B[x:=y] : s_2$}
    \LeftLabel{\small\typerule{prod}}
    \RightLabel{\small$\begin{cases}\mathcal{R}\,s_1\,s_2\,s_3\\y \not\in\fv{B}-x\end{cases}$}
    \BinaryInfC{$\Gamma \stdpts \Pi[x:A]B : s_3$}
    \bottomAlignProof
    \DisplayProof
    
    \bigskip
    
    \AxiomC{\stdok{\Gamma}}
    \LeftLabel{\small\typerule{var}}
    \RightLabel{($(x,A) \in \Gamma$)}
    \UnaryInfC{$\Gamma \vdash x : A$}
    \DisplayProof
    
    \abstreestd
    
    \apptreestd
        
    \AxiomC{$\Gamma \vdash M : A$}
    \AxiomC{$A \betaconv B$}
    \AxiomC{$\Gamma \vdash B : s$}
    \LeftLabel{\small\typerule{conv}}
    \TrinaryInfC{$\Gamma \vdash M : B$}
    \DisplayProof
    \caption{Standard (Finitary) PTS}
    \label{fig:standardSystem}
\end{figure}

The type system defined in \cref{fig:system} is equivalent to one which uses finitary rules, i.e., without using generalized induction or cofinite quantification in the \typerule{prod} and \typerule{abs} rules, and furthermore, which does not mention the third premise in the application rule. 
\cref{fig:standardSystem} shows such a presentation.
For a brief discussion on why the freshness conditions in the finitary version are stated relative to the $\lambda$-terms therein while in the infinitary one are relative to the context we refer the reader to \cite[p.~61]{pollackPhD}.

First, we have that the infinitary version of the system is \textit{sound}:

\begin{theorem}
\begin{lemenum}
\item \Verb|ptsSound : ∀ {Γ} → Γ ok → Γ okₛ|
\item \Verb|ptsSound : ∀ {Γ M A} → Γ ⊢ M ∶ A → Γ ⊢ₛ M ∶ A|
\end{lemenum}
\end{theorem}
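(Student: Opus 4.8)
The plan is to prove both clauses simultaneously by induction on the infinitary derivations of $\ok{\Gamma}$ and $\Gamma \vdash M : A$, re-deriving each judgment in the finitary system of \cref{fig:standardSystem} by applying the corresponding finitary rule to the induction hypotheses. The rules \typerule{nil}, \typerule{cons}, \typerule{sort}, \typerule{var} and \typerule{conv} are identical in both systems, so for these the translation follows immediately: one applies the induction hypotheses to the sub-derivations (this is where the mutual nature of the induction is used, e.g. \typerule{cons} feeds the typing hypothesis into the context-formation judgment) and closes with the same rule. The only genuine differences are the extra premise $\Gamma \vdash B[x:=N] : s$ present in the infinitary \typerule{app}, and the quantification over all fresh names in \typerule{prod} and \typerule{abs}.

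The \typerule{app} case is immediate: I would simply discard the third infinitary premise, apply the induction hypothesis to the remaining two to obtain $\Gamma \stdpts M : \Pi[x:A]B$ and $\Gamma \stdpts N : A$, and conclude with the finitary \typerule{app} rule. The substantive work is in \typerule{prod} and \typerule{abs}, where a derivation available for every fresh name must be specialised to a single name meeting the finitary side condition. For \typerule{prod} I would set $y = X'(\dom{\Gamma} \appendg (\fv{B}-x))$; by \cref{lemma:chiPrime} this $y$ avoids the whole list, so by \cref{lemma:genList} we obtain both $y \not\in \dom{\Gamma}$ and $y \not\in \fv{B}-x$. Instantiating the cofinitely-quantified premise, together with its induction hypothesis, at this $y$ and its freshness proof yields the single finitary premise $\Gamma, y:A \stdpts B[x:=y] : s_2$, and since $y \not\in \fv{B}-x$ is exactly the required side condition the finitary \typerule{prod} rule applies. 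The \typerule{abs} case proceeds identically but needs a witness fresh for two lists at once, so I would take $z = X'(\dom{\Gamma} \appendg (\fv{M}-x) \appendg (\fv{B}-y))$; the same $z$ then instantiates both the $B$-premise and the $M$-premise of the infinitary rule while simultaneously discharging the two side conditions $z \not\in \fv{M}-x$ and $z \not\in \fv{B}-y$.

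I expect the only real obstacle to be organizational rather than mathematical: correctly threading the induction hypotheses through the cofinitely-quantified premises. Because each such premise has the shape $\forall z \to z \not\in \dom{\Gamma} \to \ldots$, its induction hypothesis is a function of the same shape, and one must feed it the chosen witness paired with the proof of its freshness before the finitary rule can be invoked. Beyond the freshness property of the choice function (\cref{lemma:chiPrime}) and the membership characterisations of \cref{lemma:genList}, no further lemmas should be needed.
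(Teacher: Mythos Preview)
Your proposal is correct and follows the same inductive structure as the paper. The one minor difference is in how the side conditions of the finitary \typerule{prod} and \typerule{abs} rules are discharged: the paper picks the witness $X'(\dom{\Gamma})$, fresh only for the context, and then invokes \cref{lemma:freshness} on the conclusion $\Gamma \vdash \lambda[x:A]M : \Pi[y:A]B$ to deduce that any such name is automatically absent from $\fv{M}-x$ and $\fv{B}-y$; you instead bake those lists into the argument of $X'$ upfront and extract the individual freshness facts via \cref{lemma:genList}. Your route is slightly more elementary in that it does not appeal to the metatheoretic freshness lemma, while the paper's route exploits that well-typed terms mention only declared names; either works.
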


\begin{proof}
By straightforward induction on the structure of the judgments. In the case of products and abstractions, to derive the corresponding premises one has to pick some sufficiently fresh name, e.g., $X'(\dom{\Gamma})$, and use \cref{lemma:freshness} to derive the freshness side-conditions.
\end{proof}

The type system is also \textit{complete}. To prove it, we need the following renaming lemma (whose proof follows similarly to that of \cref{lemma:renaming}):

\begin{lemma}
\label{lemma:unaryRen}
\Verb|unaryRen : ∀ {Γ x y A M B} → y ∉ dom Γ → Γ ‚ x ∶ A ⊢ M ∶ B|\\
\Verb| → Γ ‚ y ∶ A ⊢ M [ x := v y ] ∶ B [ x := v y ]|
\end{lemma}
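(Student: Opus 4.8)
The plan is to obtain the conclusion as an instance of closure under substitution (\refsublemma{lemma:substitution}) applied to the \emph{renaming} substitution $(\iota, x := v\,y)$. Since $M \subs (\iota, x := v\,y) = M[x := v\,y]$ and $B \subs (\iota, x := v\,y) = B[x := v\,y]$ by the abbreviation for unary substitution, \refsublemma{lemma:substitution} will yield exactly $\Gamma, y:A \vdash M[x:=v\,y] : B[x:=v\,y]$ provided we supply two things: a proof that $\Gamma, y:A$ is well-formed, and a proof that $(\iota, x := v\,y)$ is a well-typed substitution from $\Gamma, x:A$ into $\Gamma, y:A$. The core of the argument is the latter, which closely mirrors the proof of \refsublemma{lemma:renaming}.

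For the well-formedness of the target context I would first apply context validity (\refsublemma{lemma:contextValidityAsg}) to the hypothesis $\Gamma, x:A \vdash M:B$ to get $\ok{\Gamma, x:A}$, and then invert this derivation: it can only arise from the \typerule{cons} rule, so it provides $\ok{\Gamma}$, a sort typing $\Gamma \vdash A : s$, and the side condition $x \notin \dom{\Gamma}$. Combining $\ok{\Gamma}$ and $\Gamma \vdash A : s$ with the hypothesis $y \notin \dom{\Gamma}$, a single use of \typerule{cons} gives $\ok{\Gamma, y:A}$.

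For the well-typed substitution, unfolding the definition I must show, for every declaration $(z,C) \in \Gamma, x:A$, that $\Gamma, y:A \vdash (\iota, x := v\,y)\,z : C \subs (\iota, x := v\,y)$. I would split on whether $z \equiv x$. When $z \equiv x$ we have $C \equiv A$ (since $x \notin \dom{\Gamma}$), the image is $v\,y$, and, as $\fresh{x}{A}$ by \cref{lemma:freshness}, \refsublemma{lemma:updFresh} collapses the required type $A \subs (\iota, x := v\,y)$ to $A \subs \iota$; meanwhile the \typerule{var} rule supplies $\Gamma, y:A \vdash v\,y : A$. When $z \not\equiv x$ we have $(z,C) \in \Gamma$, the image is $v\,z$, and again $C \subs (\iota, x := v\,y) \equiv C \subs \iota$ by \cref{lemma:freshness} and \refsublemma{lemma:updFresh}; here \typerule{var} supplies $\Gamma, y:A \vdash v\,z : C$. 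In both cases the judgment produced by \typerule{var} carries the \emph{raw} type ($A$, resp.\ $C$), whereas the goal demands its $\iota$-normal form.

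The main obstacle — and the one point where the argument genuinely departs from \refsublemma{lemma:renaming}, whose target types already have the shape $A\subs\sigma$ — is precisely this mismatch between a raw type $C$ and its $\iota$-normal form $C\subs\iota$. These are not syntactically equal but merely $\alpha$-convertible: from $(C\subs\iota)\subs\iota \equiv C\subs\iota$ (which holds because $\iota \odot \iota$ and $\iota$ agree on every name, via \refsublemma{lemma:subComp} and \refsublemma{lemma:subEqRes}) together with \refsublemma{lemma:characterizatinoAlphaIota} we obtain $C \subs \iota \alphaconv C$. I would therefore close the gap using closure under $\alpha$-conversion (\cref{lemma:closureAlpha} and its corollary) to rewrite the predicate of each \typerule{var} conclusion from $C$ to $C\subs\iota$, the context relation being the reflexive one. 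This completes the well-typed substitution, and feeding it together with $\ok{\Gamma, y:A}$ into \refsublemma{lemma:substitution} discharges the goal.
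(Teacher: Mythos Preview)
Your proposal is correct and matches the paper's intended approach: the paper only remarks that the proof ``follows similarly to that of \cref{lemma:renaming}'', and your argument does precisely that --- you build the well-typed renaming substitution $(\iota,x:=v\,y):(\Gamma,x{:}A)\rightharpoonup(\Gamma,y{:}A)$ by the same two-case analysis as in \cref{lemma:renaming} and then discharge the goal via \cref{lemma:substitution}. Your explicit handling of the $\iota$-normal-form mismatch through closure under $\alpha$-conversion is exactly the extra ingredient one needs here (compare the target context $\Gamma\appendg\!\!\iota$ in \cref{lemma:unary} and the use of closure under $\alpha$-conversion in the proof of \texttt{cut}); the paper simply leaves this implicit.
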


\begin{theorem}
\begin{lemenum}
\item \Verb|ptsComplete : ∀ {Γ} → Γ okₛ → Γ ok|
\item \Verb|ptsComplete : ∀ {Γ M A} → Γ ⊢ₛ M ∶ A → Γ ⊢ M ∶ A|
\end{lemenum}
\end{theorem}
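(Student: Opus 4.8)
The plan is to prove both parts by simultaneous structural induction on the derivations of $\stdok{\Gamma}$ and $\Gamma \stdpts M : A$. The bulk of the rules coincide between the two presentations, so most cases are immediate by applying the corresponding infinitary rule to the induction hypotheses. The only genuinely interesting cases are \typerule{prod} and \typerule{abs}, where the finitary system provides a single premise witnessed by one particular fresh name, whereas the infinitary system demands a derivation for \emph{every} fresh name. Bridging this gap is where \cref{lemma:unaryRen} does the work.

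First I would dispatch the easy cases. For \typerule{nil}, \typerule{sort}, \typerule{var} and \typerule{conv} the rules are literally identical, so one just reassembles the conclusion from the translated hypotheses; note the context-validity and subject premises of \typerule{conv} and \typerule{var} already live in the shared fragment. The \typerule{cons} case for the ok-judgment is likewise direct. The \typerule{app} case requires a small additional step: the finitary application rule has only two premises, but the infinitary one carries the extra premise $\Gamma \vdash B[x:=N]:s$. After translating the two hypotheses with the IH, I would recover this third premise by appealing to syntactic validity (\cref{lemma:typeValidity}) on the translated first premise $\Gamma \vdash M : \Pi[x:A]B$, then generation of $\Pi$-types (\cref{lemma:genProd}) to extract a derivation of $B[x:=y]$ at a sort for fresh $y$, and finally closure under substitution (\cref{lemma:substitution}) specialized to $\iota,y:=N$ to obtain $\Gamma \vdash B[x:=N]:s$ up to $\alpha$-conversion, closing the last case with closure under $\alpha$-conversion (\cref{lemma:closureAlpha}).

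The heart of the argument is \typerule{abs} (and the analogous \typerule{prod}). In the finitary premise I am given, for one specific $z_0 \notin \fv{M}-x$ (and $\notin\fv{B}-y$), derivations of $\Gamma, z_0 : A \stdpts B[y:=z_0]:s_2$ and $\Gamma, z_0 : A \stdpts M[x:=z_0] : B[y:=z_0]$, whose IH-translations are infinitary derivations over the same single context. To build the infinitary conclusion I must produce, for an \emph{arbitrary} $z \notin \dom{\Gamma}$, the corresponding derivations with $z$ in place of $z_0$. I would obtain these by applying \cref{lemma:unaryRen} to rename the declared variable $z_0$ to $z$, which simultaneously renames the subject and predicate via the substitution $[z_0 := v\,z]$; composing renamings and invoking \refsublemma{lemma:exp} together with the freshness of $z_0$ yields exactly $M[x:=z]$ and $B[y:=z]$ as required. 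A subtlety is that $\cref{lemma:unaryRen}$ needs $z \notin \dom{\Gamma}$, which is precisely the hypothesis under the universal quantifier of the infinitary rule, so the side condition is available.

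The main obstacle I anticipate is the bookkeeping of substitutions in the \typerule{abs} renaming step: one must check that $M[x:=z_0][z_0:=v\,z]$ reduces to $M[x:=z]$, which is not a definitional equality but follows from \refsublemma{lemma:exp} using $z_0 \notin \fv{M}-x$, and similarly for the predicate. Making this chain of rewrites precise — and ensuring the renamed predicate $B[y:=z_0][z_0:=v\,z]$ matches the type appearing in the renamed subject derivation so that the infinitary \typerule{abs} rule actually applies — is the delicate part, though no new lemmas beyond those already established are needed.
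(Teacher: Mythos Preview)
Your proposal is correct and follows essentially the same route as the paper. The \typerule{abs} (and \typerule{prod}) case is handled exactly as in the paper: translate via the IH, then use \cref{lemma:unaryRen} followed by \refsublemma{lemma:exp} to pass from the single witness $z_0$ to an arbitrary fresh $z$. For \typerule{app} the paper invokes the packaged \textit{cut} lemma and then \refsublemma{lemma:exp} (which gives a syntactic equality, not merely $\alpha$-convertibility), whereas you unfold cut into its ingredients (\cref{lemma:substitution} and closure under $\alpha$-conversion); this is the same argument spelled out, not a different one.
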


\begin{proof}
By induction on the structure of the judgments.
\begin{itemize}
    \item Case \typerule{abs}.
    We have:
    \begin{center}
        \AxiomC{$\Gamma \stdpts A : s_1$}
        \AxiomC{$\Gamma, z : A \stdpts B[y:=z] : s_2$}
        \AxiomC{$\Gamma, z : A \stdpts M[x:=z] : B[y:=z]$}           
        \TrinaryInfC{$\Gamma \stdpts \lambda[ x : A ] M : \Pi[ y : A ] B$}
        \DisplayProof
    \end{center}
    with $z\not\in\fv{M}-x$ and $z\not\in\fv{B}-y$, 
    and we must derive $\Gamma \vdash \lambda[ x : A ] M : \Pi[ y : A ] B$.
    To use the abstraction rule we have to translate each premise. The left-most one is immediate from the IH~(ii). The other two are analogous so we will show only the right-most one. First, by the IH~(ii) we have $\Gamma , z:A \vdash M[x:=z] : B[y:=z]$. Let $z'$ be some name not in $\dom{\Gamma}$. By \cref{lemma:unaryRen} we have $\Gamma , z':A \vdash M[x:=z][z:=z'] : B[y:=z][z:=z']$, so by \refsublemma{lemma:exp} we have the desired result.
    
    \item Case \typerule{app}. 
    We have:
    \begin{center}
        \AxiomC{$\Gamma \stdpts M : \Pi[ x : A ] B$}
        \AxiomC{$\Gamma \stdpts N : A$}
        \BinaryInfC{$\Gamma \stdpts M \app N : B[x:=N]$}
        \DisplayProof
    \end{center}
    and we must derive $\Gamma \vdash M \app N : B[x:=N]$. By the IH~(ii) we have $\Gamma \vdash M : \Pi[x:A]B$ and $\Gamma \vdash N : A$. 
    Let $z=X'(\dom{\Gamma})$.
    By syntactic validity we obtain $\Gamma \vdash \Pi[x:A]B : s$ for some $s$, so by the generation lemma we derive $\Gamma,z:A \vdash B[x:=z] : s$. Now, by cut we have $\Gamma \vdash B[x:=z][z:=N] : s$, and so by \refsublemma{lemma:exp} we can derive the missing premise, $\Gamma \vdash B[x:=N] : s$.
\end{itemize}
\end{proof}

\section{Related Work}
\label{sec:related}

McKinna and Pollack \cite{mckinna93,mckinna99,pollackPhD} put forward in the LEGO proof assistant \cite{lego} the mechanization of a great body of knowledge about the metatheory of a generalization of the PTS, the Cumulative Type System (CTS): CR and standardization for $\beta$-conversion, subject reduction,
decidability of the type system (assuming normlization), etc. 
The syntax uses two sort of names, one for the free variables and other for the bound ones. Consequently, two definitions of the substitution operation, which are unary, must be given.
Since the set of variables are disjoint, it is impossible for name capture to happen. 
Besides, since substitution does not perform any renaming, the identity of the $\lambda$-terms is preserved during the operation. 
As a result, $\alpha$-conversion is seldom used in the whole development.
However, since the syntax allows to build $\lambda$-term that do not have an ordinary interpretation, i.e., those who mention names from the set of the bound variables which are actually not bound to any $\lambda$-abstraction or $\Pi$-type, 
a wellformedness predicate must accompany many results to rule them out from consideration and which becomes rather ubiquitous. 

Barras and Werner \cite{barras96a,barras96aShort} mechanized a substantial part of the metatheory of the Calculus of Constructions (CC) in Coq culminating with the decidability of the type system. The syntax uses de Bruijn notation. 
In an unpublished work \cite{barraspts}, Barras extended the metatheoretic results to the PTS and CTS. The sources can be found in \cite{urbanSources}.

In \cite{urban2011}, Urban et al. used the Isabelle/HOL system \cite{nipkow2002}   together with the Nominal Datatype Package (NDP) or Nominal Isabelle \cite{ndp} to mechanize the metatheory of LF \cite{harper2005}.
The NDP provides a framework to work with inductive types with binders and their associated induction and recursion principles modulo $\alpha$-conversion. 
A limitation of their approach is that the NDP does not yet allow generating executable code, so an implementation for a type-checker cannot be extracted directly. 
Also, since HOL is founded on classical logic, the results about the decidability of LF were not entirely formalized. The sources for LF and NDP can be found respectively in \cite{ptsSourcesGit} and in Isabelle's distribution \cite{isabelle2016}.

Next we point out the reader some work at the forefront in the mechanization of type theory; 
these developments focus on larger object theories, e.g., featuring universes, large elimination, $\Sigma$-types, and so on, thus a comparison with our work does not seem relevant yet (all of them use de Bruijn indices): 
In \cite{abel2018}, Abel et al. present a proof of the decidability of conversion for a fragment of Martin-L\"of's Type Theory in Agda;  
Adjedj et al. \cite{adjedj24} use Coq to mechanize a proof of the decidability for a type system of similar characteristics to the one above, and; in \cite{sozeau2025}, Sozeau et al. present a partial formalization about the decidability of a considerable part of Coq's kernel (normalization is assumed), written in Coq.

\section{Conclusions}
\label{sec:conclusions}

We have formalized on machine some interesting body of knowledge for the PTS using conventional syntax, i.e., first-order with one-sorted variables names and without identifying $\alpha$-convertible $\lambda$-terms. Among other results we have proven: weakening, syntactic validity, closure under $\alpha$-conversion and substitution.
In the course we had to update the existing framework of Stoughton's substitutions with Church-style $\lambda$-abstractions and $\Pi$-types.
We have also given a new definition for $\alpha$-conversion that works better than previous ones in the sense that it enabled us to prove some key lemmas by using simpler methods.
Except for the soundness and completeness of this new definition, which is not used in the whole development,
all results follow by structural induction on the various relations defined. 

The use of conventional syntax allowed us to prove closure of $\beta$-conversion under substitution directly by using structural induction. In contrast, McKinna and Pollack had to give an alternative definition using infinitary rules. The proof that both characterizations define the same relation follows by a (supplementary) multiple renaming lemma.
This is also the case for their equivalence result relating the different presentations of the PTS. 
In our case, we did not have to prove any equivalence result for $\beta$-conversion since our single definition is already standard.
As to the equivalence of the type systems, unlike them, it bears repeating, we have been able to reuse the substitution lemma.

The resulting Agda code is within the limits of what is manageable. 
The entire development spans about 3000LOC and it is divided equally between the framework and the metatheory of the PTS.
To put in perspective, the work by Urban et al. is about 15KLOC, from which 1800LOC belongs to the metatheory up to syntactic validity, a bit larger than our counterpart.
The NDP on the other hand is over 9300LOC.
As to the work by Barras on the PTS in \cite{barraspts} (which does not address the problem of names at all since it uses de Bruijn indices), the entire corresponding formalization is approximately 2600LOC.

\bibliography{references}

\end{document}